\newcommand{\openone}{\leavevmode\hbox{\small1\normalsize\kern-.33em1}}
\def\UrlSpecials{\do\~{\kern -.15em\lower .7ex\hbox{~}\kern .04em}} \catcode`~=13
\newcommand{\nn}{\nonumber}
\newcommand{\calA}{\mathcal{A}}
\newcommand{\calB}{\mathcal{B}}
\newcommand{\calC}{\mathcal{C}}
\newcommand{\calD}{\mathcal{D}}
\newcommand{\calH}{\mathcal{H}}
\newcommand{\calK}{\mathcal{K}}
\newcommand{\calL}{\mathcal{L}}
\newcommand{\calM}{\mathcal{M}}
\newcommand{\calN}{\mathcal{N}}
\newcommand{\calP}{\mathcal{P}}
\newcommand{\calQ}{\mathcal{Q}}
\newcommand{\calR}{\mathcal{R}}
\newcommand{\calS}{\mathcal{S}}
\newcommand{\calT}{\mathcal{T}}
\newcommand{\calU}{\mathcal{U}}
\newcommand{\calV}{\mathcal{V}}
\newcommand{\calW}{\mathcal{W}}
\newcommand{\calX}{\mathcal{X}}
\newcommand{\calY}{\mathcal{Y}}
\newcommand{\hatcalX}{\hat{\calX}}
\newcommand{\rmb}{\mathrm{b}}
\newcommand{\rme}{\mathrm{e}}
\newcommand{\rmL}{\mathrm{L}}
\newcommand{\rmP}{\mathrm{P}}
\newcommand{\bbN}{\mathbb{N}}
\newcommand{\bbR}{\mathbb{R}}
\DeclareMathAlphabet{\mathbsf}{OT1}{cmss}{bx}{n}
\DeclareMathAlphabet{\mathssf}{OT1}{cmss}{m}{sl}
\DeclareSymbolFont{bsfletters}{OT1}{cmss}{bx}{n}
\DeclareSymbolFont{ssfletters}{OT1}{cmss}{m}{n}
\DeclareMathSymbol{\bsfGamma}{0}{bsfletters}{'000}
\DeclareMathSymbol{\ssfGamma}{0}{ssfletters}{'000}
\DeclareMathSymbol{\bsfDelta}{0}{bsfletters}{'001}
\DeclareMathSymbol{\ssfDelta}{0}{ssfletters}{'001}
\DeclareMathSymbol{\bsfTheta}{0}{bsfletters}{'002}
\DeclareMathSymbol{\ssfTheta}{0}{ssfletters}{'002}
\DeclareMathSymbol{\bsfLambda}{0}{bsfletters}{'003}
\DeclareMathSymbol{\ssfLambda}{0}{ssfletters}{'003}
\DeclareMathSymbol{\bsfXi}{0}{bsfletters}{'004}
\DeclareMathSymbol{\ssfXi}{0}{ssfletters}{'004}
\DeclareMathSymbol{\bsfPi}{0}{bsfletters}{'005}
\DeclareMathSymbol{\ssfPi}{0}{ssfletters}{'005}
\DeclareMathSymbol{\bsfSigma}{0}{bsfletters}{'006}
\DeclareMathSymbol{\ssfSigma}{0}{ssfletters}{'006}
\DeclareMathSymbol{\bsfUpsilon}{0}{bsfletters}{'007}
\DeclareMathSymbol{\ssfUpsilon}{0}{ssfletters}{'007}
\DeclareMathSymbol{\bsfPhi}{0}{bsfletters}{'010}
\DeclareMathSymbol{\ssfPhi}{0}{ssfletters}{'010}
\DeclareMathSymbol{\bsfPsi}{0}{bsfletters}{'011}
\DeclareMathSymbol{\ssfPsi}{0}{ssfletters}{'011}
\DeclareMathSymbol{\bsfOmega}{0}{bsfletters}{'012}
\DeclareMathSymbol{\ssfOmega}{0}{ssfletters}{'012}
\newcommand{\hatU}{\hat{U}}
\newcommand{\hatx}{\hat{x}}
\newcommand{\hatX}{\hat{X}}
\newtheorem{theorem}{Theorem}
\newtheorem{lemma}[theorem]{Lemma}
\newtheorem{corollary}[theorem]{Corollary}
\newtheorem{definition}{Definition}
\newtheorem{remark}{Remark}
\definecolor{Dyellow}{RGB}{254,152,0}
\definecolor{Dgreen}{RGB}{0,176,80}
\begin{document}

\title{Successive Refinement of Shannon Cipher System Under Maximal Leakage}

\author{
\IEEEauthorblockN{Zhuangfei Wu, Lin Bai and Lin Zhou}
\thanks{This work has been partially presented at ISIT 2023~\cite{wu2023isit}.}
\thanks{The authors are with the School of Cyber Science and Technology, Beihang University, Beijing, China, 100191 (Emails: \{zhuangfeiwu, l.bai, lzhou\}@buaa.edu.cn). }
}

\maketitle
\flushbottom
\allowdisplaybreaks[1]

\begin{abstract}
We study the successive refinement setting of Shannon cipher system (SCS) under the maximal leakage secrecy metric for discrete memoryless sources under bounded distortion measures. Specifically, we generalize the threat model for the point-to-point rate-distortion setting of Issa, Wagner and Kamath (T-IT 2020) to the multiterminal successive refinement setting. Under mild conditions that correspond to partial secrecy, we characterize the asymptotically optimal normalized maximal leakage region for both the joint excess-distortion probability (JEP) and the expected distortion reliability constraints. Under JEP, in the achievability part, we propose a type-based coding scheme, analyze the reliability guarantee for JEP and bound the leakage of the information source through compressed messages. In the converse part, by analyzing a guessing scheme of the eavesdropper, we prove the optimality of our achievability result. Under expected distortion, the achievability part is established similarly to the JEP counterpart. The converse proof proceeds by generalizing the corresponding results for the rate-distortion setting of SCS by Schieler and Cuff (T-IT 2014) to the successive refinement setting. Somewhat surprisingly, the normalized maximal leakage regions under both JEP and expected distortion constraints are identical under certain conditions, although JEP appears to be a stronger reliability constraint.
\end{abstract}

\begin{IEEEkeywords}
Discrete memoryless source, Rate-distortion, Information forensics, Source coding, Physical layer security
\end{IEEEkeywords}

\section{Introduction}\label{sec:intro}
The Shannon cipher system (SCS)~\cite{shannon1949secrecy} is a classical model in information-theoretic secrecy, where a transmitter and a legitimate receiver are connected via a noiseless channel and share a secret key to achieve secure communication. The eavesdropper, named Eve, has access to the public channel as well as the source distribution and encryption schemes. To achieve perfect secrecy in SCS, which requires that the source sequence and the eavesdropped message are statistically independent, a necessary and sufficient condition is that the entropy of the secret key is no less than the entropy of the source sequence. One would desire the secret key to be uniformly distributed for security and thus the entropy of the secrecy key equals to the logarithm of its length. However, the shared secret key usually has a limited \emph{finite} length and is updated infrequently in practical communication systems, which is insufficient to ensure perfect secrecy.

To resolve the above problem, inspired by Shannon's seminal work~\cite{shannon1949secrecy}, several studies~\cite{yamamoto1997,merhav1999,schieler2014,schieler2016,Weinberger2017,issa2017guess} considered partial secrecy for SCS given a key rate under different security measures. Specifically, Yamamoto~\cite{yamamoto1997} adopted a distortion-based approach where the secrecy was measured by the minimum expected distortion at the eavesdropper. Merhav and Arikan~\cite{merhav1999} measured the secrecy of SCS by the number of guesses needed for the eavesdropper to successively reproduce the source sequence. Schieler and Cuff~\cite{schieler2016} proposed to consider the expected minimum distortion over an exponentially-sized list of estimates generated by the eavesdropper. Weinberger and Merhav~\cite{Weinberger2017} and Issa and Wagner~\cite{issa2017guess} measured the secrecy by the probability of successfully guessing the source sequence within a target distortion level.

Recently, Issa, Wagner and Kamath~\cite{issa2020leakage} introduced a new metric---\emph{maximal leakage}, into a threat model that captures several setups including SCS. In particular, the authors of \cite{issa2020leakage} derived the optimal asymptotic limit of the normalized maximal leakage for lossy compression of a discrete memoryless source (DMS) under both the excess-distortion probability and the expected distortion constraints. As an information measure, maximal leakage quantifies the maximal logarithmic gain in guessing any function of the original data from the public messages over random guessing. Furthermore, maximal leakage satisfies several axiomatic properties including data processing inequality, additivity property and independence property (cf.~\cite[Section I]{issa2020leakage}).

Maximal leakage has advantages in several aspects over other metrics. Compared with the mutual information measure, using maximal leakage can better characterize the severity of information leakage. As discussed in~\cite[Example 8]{issa2020leakage}, consider the alphabet $\calX=\{0,1\}^{8n}$ for an integer $n\in\bbN$ and let the random variable $X$ be distributed uniformly over $\calX$. Let $Y$ be the random variable that equals to $X$ if $X\;\mathrm{mod}\;8=0$ and equals to $1$ otherwise. Furthermore, let $Z$ be the first $n+1$ bits of $X$. Given the above setting, using the random variable $Y$, one can guess the random variable $X$ correctly with probability of at least $\frac{1}{8}$ while the probability of guessing $X$ correctly from $Z$ is only $2^{-7n+1}$. However, measuring the leakage via mutual information is somewhat contrary to intuition since $I(X;Y)\approx(n+0.169)\log2<I(X;Z) \approx(n+1)\log2$. It can be verified that the maximal leakage measure (cf.  Definition~\ref{def:maximal_leakage_DMS}) is consistent with the probability of correct guessing since $\rmL(X\to Y)=\log(2^{8n-3}+1)\geq\rmL(X\to Z)=(n+1)\log 2$. Note that expected distortion~\cite{yamamoto1997} and the expected number of guesses~\cite{merhav1999} could also predicate some insecure system as secure (cf.~\cite[Example 2]{issa2020leakage}). Finally, the threat model of maximal leakage has fewer assumptions about the eavesdropper while~\cite{Weinberger2017,issa2017guess} assume that the eavesdropper has access to the distortion measure and even the target distortion level shared by the encoder and the decoder. Due to the above advantages, maximal leakage has been adopted in various settings as the secrecy/privacy measure, e.g., membership privacy~\cite{Saeidian2021TIFS}, biometric template protection~\cite{Otroshi2023TIFS}, and information retrieval~\cite{Yakimenka2022JSAC}. For a comprehending review of various secrecy metrics, readers can refer to the surveys of Bloch et al.~\cite{bloch2021overview} and Hsu et al.~\cite{hsu2021survey}.

{All above works on SCS were based on the point-to-point source coding model while the characterization of the information leakage for multi-terminal models is missing. A representative multiterminal source coding problem is successive refinement~\cite{rimoldi1994,koshelev1981estimation,equitz1991successive}. This problem is an information-theoretic formulation of whether it is possible to decompose a lossy compression task with a target distortion level into multiple lossy compression tasks with decreasing distortion levels without loss of performance. Successive refinement has found diverse applications including clinical diagnosis using X-rays and image/video compression~\cite{rimoldi1994}. To evaluate the reliability of a code for successive refinement, there are two performance criteria: joint excess-distortion probability (JEP) and expected distortion. The JEP criterion quantifies the probability where either decoder fails to reconstruct the source sequence within the desired distortion level. The expected distortion criterion requires the expected distortion between the source sequence and the reproduced sequences of both decoders to be bounded by desired distortion levels. For DMS, Rimoldi~\cite{rimoldi1994} derived the rate-distortion region that asymptotically characterizes the optimal rate requirements of both encoders with vanishing JEP. The results of\cite{rimoldi1994} were subsequently refined by Kanlis and Narayan~\cite{kanlis1996error} who showed that the JEP vanishes exponentially as the blocklength $n$ increases for any rate pair strictly inside the rate-distortion region. Koshelev~\cite{koshelev1981estimation} and Equitz and Cover~\cite{equitz1991successive} studied the conditions for successive refinability, where optimal compression rates for both decoders can be simultaneously achieved as if the optimal codes are separately used for two point-to-point rate-distortion problems. Under JEP, Zhou, Tan and Motani~\cite{zhou2016second} refined Rimodi's results by deriving second-order asymptotics and moderate deviation asymptotics for DMS and Gaussian memoryless sources (GMS).  Bai, Wu and Zhou~\cite{wu2023TIT} further derived refined asymptotics of successive refinement for arbitrary memoryless sources using Gaussian codebooks under JEP. Tian et al.~\cite{tian2008SRBC} studied the Gaussian broadcast channel using a successive refinement code for GMS under the expected distortion constraint.

One might then wonder whether it is possible to generalize the SCS to the successive refinement model, i.e., the successive refinement model with an eavesdropper who has access to the messages from both encoders. Under this setting, one can study the trade-off between reliability, e.g., the coding performance, and secrecy, e.g., information leakage to the eavesdropper from the messages sent by two encoders.

\subsection{Main Contributions}\label{sec:main_contribution}
In this paper, we answer the above question by studying the successive refinement setting of Shannon cipher system under maximal leakage for DMS under bounded distortion measures.  We adopt maximal leakage as the secrecy metric since it has advantages over other metrics in measuring the secrecy in SCS as mentioned in the 4th paragraph of Section~\ref{sec:intro}. To measure the reliability of a code, we consider two performance criteria: JEP and expected distortion. Under both criteria, we derive the asymptotic optimal normalized maximal leakage region under mild conditions.

Under JEP, we propose a type-based coding scheme and characterize the asymptotically achievable normalized maximal leakage region. By analyzing a guessing scheme of the eavesdropper, we prove the optimality of the achievable results under mild conditions. Both achievability and converse results are established by extending the point-to-point result~\cite[Theorem 8]{issa2020leakage} to the successive refinement setting. Our results reveal the fundamental trade-off between reliability and secrecy in the proposed model. When achievability and converse regions match, our coding scheme satisfies the successive refinability under maximal leakage if the source-distortion pair is successively refinable. In this case, there is no additional information leakage for successive refinement compared with rate-distortion in~\cite{issa2020leakage} if one aims to compress the source at the same distortion level.

Under expected distortion, inspired by~\cite[Theorem 9]{issa2020leakage}, we establish the achievable asymptotic normalized maximal leakage region by proposing a rate-distortion code. Using the fact that maximal leakage equals to the Sibson mutual information of order infinity for DMS~\cite[Theorem 1]{issa2020leakage}, we show that the above bound is tight under mild conditions. To do so, we generalize~\cite[Theorem 1, Corollary 5]{schieler2014}, where the the secrecy of rate-distortion with SCS is measured by equivocation, to the successive refinement setting. Furthermore, we show that for DMS satisfying certain conditions, the normalized maximal leakage regions under both expected distortion and JEP are identical, although the expected distortion constraint appears to be a looser criterion.

We next clarify our contributions beyond~\cite{issa2020leakage}. Note that the authors of~\cite{issa2020leakage} studied the Shannon cipher system, which corresponds to point-to-point lossy compression with a secrecy constraint. In contrast, we study the more general successive refinement setting with one more pair of encoder and decoder, which could model multi-user secret communication using keys. Due to the complication of the system model, the proof techniques for both the achievability and converse parts are different, especially in the analyses of the additional layer of encoder and decoder. Firstly, in the achievability part, we need to construct a coding scheme using the tailored type-covering lemma for the successive refinement problem and analyze the leakage of the source sequence from the encoded messages of both encoders. Secondly, in the converse analyses, under both JEP and expected distortion constraints, additional techniques are required to tackle two pairs of encoders and decoders. Specifically, under JEP, in order to analyze the eavesdropper's probability of correctly guessing the source sequence, we judiciously tailor the technique for SCS to the multiuser successive refinement setting, as detailed in Section~\ref{sec:Conv_DMS_proof_lemma_converse_DMS}. Under expected distortion, for the point-to-point SCS setting, the authors directly applied the existing result in \cite{schieler2014} and used the relationship between maximal leakage and mutual information. However, the corresponding result of \cite{schieler2014} for successive refinement is not available. As a result, we study the successive refinement of SCS with causal disclosure, establish the corresponding rate-equivocation region and finally derive an outer bound for the maximal leakage region, as detailed in Section~\ref{sec:Conv_DMS_expectation}.

\subsection{Organization of the Rest of the Paper}
The rest of the paper is organized as follows. In Section~\ref{sec:problem fomulation}, we set up the notation and present the system model of successive refinement of SCS with necessary definitions. In Section~\ref{sec:main_results}, we present our main results and corresponding remarks. The achievability and converse proofs under JEP are presented in Sections~\ref{sec:Ach_TheoremDMS} and \ref{sec:Conv_TheoremDMS}, respectively. The proof of the results under expected distortion are provided in Section~\ref{sec:Conv_DMS_expectation}. Finally, in Section~\ref{sec:conclusion}, we conclude the paper and discuss future directions.

\section{Problem Formulation and Definitions}
\label{sec:problem fomulation}

\subsection*{Notation}
Random variables are in capital (e.g., $X$) and their realizations are in lower case (e.g., $x$). Random vectors of length $n$ and their particular realizations are denoted as $X^n:= (X_1, \ldots, X_n)$ and $x^n=(x_1,\ldots,x_n)$, respectively.  We use $\bbR$, $\bbR_+$, $\bbN$ to denote the set of real numbers, positive real numbers and integers respectively. We use calligraphic font (e.g., $\mathcal{X}$) to denote all other sets. For any two integers $(a,b)\in\bbN^2$, we use $[a:b]$ to denote the set of integers between $a$ and $b$, and we use $[a]$ to denote $[1:a]$. We use $\exp\{x\}$ to denote $2^x$ and use $\{x\}^+$ to denote $\max\{0,x\}$. All logarithms are base $2$. For any $p\in(0,1)$, we use  $H_b(p)$ to denote the binary entropy function $-p\log p-(1-p)\log(1-p)$. The set of all probability distributions on an alphabet $\calX$ is denoted by $\calP(\calX)$. For method of types, given a sequence $x^n$, we use $Q_{x^n}$ to denote its empirical distribution. The set of types formed from length-$n$ sequences taking values in $\calX^n$ is denoted as $\calQ_{\calX}^n$. Given $Q_X\in\calQ_{\calX}^n$, the set of all sequences of length-$n$ with type $Q_X$, also known as the type class, is denoted by $\calT_{Q_X}^n$.

\subsection{Problem Formulation}
As illustrated in Fig. \ref{fig:systemmodel}, we study the successive refinement setting of the Shannon cipher system~\cite{shannon1949secrecy}. Consider a memoryless source $X^n$ with distribution $P$ fully supported on the discrete alphabet $\calX$. For $i\in[2]$, the encoder $f_i$ and the decoder $\phi_i$ share a key $K_i^n$. The key $K_1^n$ is shared by both encoders $(f_1,f_2)$ and both decoders $(\phi_1,\phi_2)$ while $K_2^n$ is only shared by $f_2$ and $\phi_2$. Using $M_1$ and $K_1^n$, the decoder $\phi_1$ aims to reproduce the source sequence within distortion level $D_1$ and the reproduced source sequence $\hatX_1^n$ takes values in $\hatcalX_1^n$. With additional access to $M_2$ and $K_2^n$, the decoder $\phi_2$ aims to obtain a finer reproduction of the source sequence within distortion level $D_2<D_1$. The eavesdropper Eve aims to guess a random function of the source sequence $X^n$, denoted by $U$\footnote{Note that $P_{U|X}$ is unknown to the system designer, which can model various eavesdroppers. Such a setting ensures minimal assumptions about the eavesdropper.}, with knowledge of the compressed messages $(M_1,M_2)$, the source distribution and the encoding and decoding functions.

Let $(n,R_1,R_2,r_1,r_2)\in\bbN\times\bbR_+^4$ be arbitrary.
\begin{definition}\label{def:nR1R2_code}
An $(n,R_1,R_2,r_1,r_2)$-code for successive refinement Shannon cipher system consists of \footnote{Without loss of generality, we ignore the integer constraint for $nr_i$ and $nR_i$ with $i\in[2]$.}
\begin{itemize}
\item keys $K_i^n\in\calK_i^n=\{0,1\}^{nr_i}$ that are uniformly distributed over $\calK_i^n$ for each $i\in[2]$, where $r_i$ is the rate of the key $K_i^n$,
\item two encoders:
\begin{align}
f_1&:\calX^n\times\calK_1^n\to\calM_1:=[2^{nR_1}],\\*
f_2&:\calX^n\times\calK_2^n\to\calM_2:=[2^{nR_2}],
\end{align}
where $R_i$ is the rate of message $M_i$,
\item and two decoders:
\begin{align}
\phi_1&:\calM_1\times\calK_1^n\to\hat{\calX}_1,\\*
\phi_2&:\calM_1\times\calM_2\times\calK_1^n\times\calK_2^n \to\hat{\calX}_2.
\end{align}
\end{itemize}
\end{definition}
For ease of notation, given each $i\in[2]$, we use $\vec{R_i}$ to denote the rate pair $(R_i,r_i)$. This way, an $(n,R_1,R_2,r_1,r_2)$-code is equivalent to an $(n,\vec{R_1},\vec{R_2})$-code.

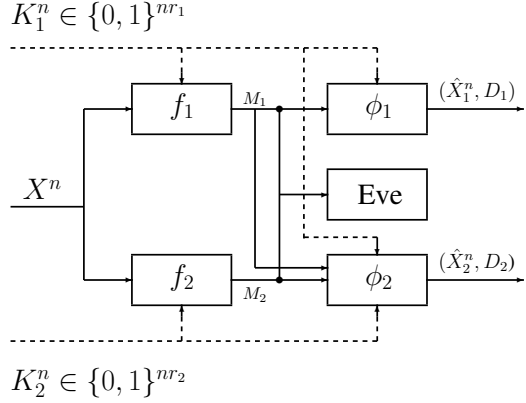
\begin{figure}[t]
\centering
\setlength{\unitlength}{0.5cm}
\scalebox{0.65}{
\begin{picture}(22,18)
\linethickness{1pt}
\put(1.5,8.8){\makebox{\LARGE$X^n$}}
\multiput(1,3)(0.4,0){38}{\line(1,0){0.2}}
\put(1,1){\makebox{\LARGE$K_2^n\in\{0,1\}^{nr_2}$}}
\multiput(8,3)(0,0.4){3}{\line(0,1){0.2}}
\put(8,4){\vector(0,1){0.5}}
\multiput(16,3)(0,0.4){3}{\line(0,1){0.2}}
\put(16,4){\vector(0,1){0.5}}
\multiput(1,15)(0.4,0){38}{\line(1,0){0.2}}
\put(1,16){\makebox{\LARGE$K_1^n\in\{0,1\}^{nr_1}$}}
\multiput(8,15)(0,-0.4){3}{\line(0,-1){0.2}}
\put(8,14){\vector(0,-1){0.5}}
\multiput(16,15)(0,-0.4){3}{\line(0,-1){0.2}}
\put(16,14){\vector(0,-1){0.5}}
\multiput(13,15)(0,-0.4){20}{\line(0,-1){0.2}}
\multiput(13,7.25)(0.4,0){8}{\line(1,0){0.2}}
\put(16,7.25){\vector(0,-1){0.75}}
\put(6,4.5){\framebox(4,2)}
\put(6,11.5){\framebox(4,2)}
\put(7.5,5.3){\makebox{\LARGE$f_2$}}
\put(7.5,12.3){\makebox{\LARGE$f_1$}}
\put(1,8.5){\line(1,0){3}}
\put(4,12.5){\vector(1,0){2}}
\put(4,5.5){\line(0,1){7}}
\put(4,5.5){\vector(1,0){2}}
\put(14,4.5){\framebox(4,2)}
\put(14,8){\framebox(4,2)}
\put(14,11.5){\framebox(4,2)}
\put(15.5,12.3){\makebox{\LARGE$\phi_1$}}
\put(15.5,5.3){\makebox{\LARGE$\phi_2$}}
\put(15.2,8.6){\makebox{\LARGE Eve}}
\put(10,5.5){\vector(1,0){4}}
\put(11,4.9){\makebox(0,0){$M_2$}}
\put(10,12.5){\vector(1,0){4}}
\put(11,12.9){\makebox(0,0){$M_1$}}
\put(11,12.5){\line(0,-1){6.5}}
\put(11,6){\vector(1,0){3}}
\put(12,12.5){\line(0,-1){7}}
\put(12,12.5){\circle*{0.3})}
\put(12,5.5){\circle*{0.3})}
\put(12,9){\vector(1,0){2}}
\put(18,5.5){\vector(1,0){4}}
\put(18.5,6){\makebox{\large$(\hatX_2^n,D_2$)}}
\put(18,12.5){\vector(1,0){4}}
\put(18.5,13){\makebox{\large$(\hatX_1^n,D_1)$}}
\end{picture}}
\caption{Successive refinement of Shannon cipher system with an eavesdropper.}
\label{fig:systemmodel}
\end{figure}

\subsection{Definitions of Reliability and Secrecy Criteria}
For $i\in[2]$, define two bounded distortion measures: $d_i:\calX\times\hat{\calX}_i\to[0,\infty)$ such that for each $x\in\calX$, there exists $\hatx_i\in\hat{\calX}_i$ satisfying $d_i(x,\hatx_i)=0$. Furthermore, the distortion between $x^n$ and $\hatx_i^n$ is defined as $d_i(x^n,\hatx_i^n):= \frac{1}{n}\sum_{j=1}^{n}d_i(x_j,\hatx_{i,j})$. For any $(D_1,D_2)\in\bbR_+^2$, the joint-excess-distortion probability (JEP) is defined as follows:
\begin{align}
\rmP_{\rm{e}}^n(D_1,D_2):=\Pr\{d_1(X^n,\hatX_1^n)>D_1\;\mathrm{or}\;d_2(X^n,\hatX_2^n)>D_2 \}. \label{def:JEP}
\end{align}

To measure the information leakage of the source from compressed messages, we use use the following definition of maximal leakage for DMS in~\cite[Theorem 1]{issa2020leakage}:
\begin{definition}\label{def:maximal_leakage_DMS}
For any distribution $P_{XY}$ defined on the finite alphabet $\calX\times\calY$, maximal leakage from $X$ to $Y$ is defined as
\begin{align}
\rmL(X\to Y):=\log\sum\limits_{y\in\calY}\max\limits_{\substack{x\in\calX:\\P(x)>0}} P_{Y|X}(y|x).
\end{align}

\end{definition}

Maximal leakage has advantages over other secrecy metric, e.g., expected distortion and mutual information, when the threat results from the adversary who tries to guess sensitive information based on observed messages. For example, as discussed in~\cite[Examples 7,8]{issa2020leakage}, using maximal leakage is less likely to underestimate the risk of such threats.

Let $\alpha\in\bbR_+$ be arbitrary. The fundamental limit for successive refinement SCS under the JEP constraint is the normalized maximal leakage region, which is defined as follows.
\begin{definition}\label{def:leakage_pair}
A pair $(L_1,L_2)$ is said to be $(D_1,D_2,\vec{R_1},\vec{R_2},\alpha)$-achievable under the JEP constraint if there exists a sequence of $(n,\vec{R_1},\vec{R_2})$-codes such that
\begin{align}
\limsup_{n\to\infty}\frac{1}{n}\rmL(X^n\to M_1)&\leq L_1, \label{def:ach_L1}\\*
\limsup_{n\to\infty}\frac{1}{n}\rmL(X^n\to M_1M_2)&\leq L_2, \label{def:ach_L2}
\end{align}
and
\begin{align}
\rmP_{\rm{e}}^n(D_1,D_2)\leq2^{-n\alpha}. \label{def:ach_JEP}
\end{align}
The closure of the set of all $(D_1,D_2,\vec{R_1},\vec{R_2},\alpha)$-achievable normalized maximal leakage pairs is called $(D_1,D_2,\vec{R_1},\vec{R_2},\alpha)$-achievable normalized maximal leakage region and denoted as $\calL(D_1,D_2,\vec{R_1},\vec{R_2},\alpha|P)$.
\end{definition}
Definition \ref{def:leakage_pair} defines the achievable maximal leakage region subject to a JEP constraint. Note that the boundary of the region $\calL(D_1,D_2,\vec{R_1},\vec{R_2},\alpha|P)$ are determined by the asymptotic limits of $\frac{1}{n}\rmL(X^n\to M_1)$ and $\frac{1}{n}\rmL(X^n\to M_1M_2)$ for a sequence $(n,\vec{R_1},\vec{R_2})$-codes.

Another widely adopted reliability criterion for lossy source coding is expected distortion~\cite{shannon1959coding,Gray1998quantization}. Accordingly, the fundamental limit under expected distortion is defined as follows.
\begin{definition}\label{def:leakage_region_expected}
A pair $(L_1,L_2)$ is said to be $(D_1,D_2,\vec{R_1},\vec{R_2})$-achievable under expected distortion if there exists a sequence of $(n,\vec{R_1},\vec{R_2})$-codes such that
\begin{align}
\limsup_{n\to\infty}\frac{1}{n}\rmL(X^n\to M_1)&\leq L_1,\\
\limsup_{n\to\infty}\frac{1}{n}\rmL(X^n\to M_1M_2)&\leq L_2,
\end{align}
and
\begin{align}
\mathbf{E}\big[d_1(X^n,\hatX_1^n)\big]\leq D_1,\label{avg:const1}\\
\mathbf{E}\big[d_2(X^n,\hatX_2^n)\big]\leq D_2\label{avg:const2}.
\end{align}
The closure of the set of all $(D_1,D_2,\vec{R_1},\vec{R_2})$-achievable normalized maximal leakage pairs is called the $(D_1,D_2,\vec{R_1},\vec{R_2})$-achievable normalized maximal leakage region and denoted as $\calL_{\rm{exp}}(D_1,D_2,\vec{R_1},\vec{R_2}|P)$.
\end{definition}

One might wonder why we do not consider the leakage from the message $M_2$, i.e., $\rmL(X^n\to M_2)$. It appears that there is no difference between $M_1$ and $M_2$ from the point of view of the adversary. However, it follows from Definition~\ref{def:nR1R2_code} that one could not decode the source sequence correctly from $M_2$ without $M_1$ in the successive refinement setting. This indicates that no meaningful reliable performance analysis can be obtained by simply observing $M_2$. Furthermore, the leakage $\mathrm{L}(X^n\to M_1M_2)$ from $(M_1,M_2)$ is naturally an upper bound for the leakage $\mathrm{L}(X\to M_2)$ only from $M_2$ since the adversary has access to more information in the former case.

\section{Main Results}\label{sec:main_results}
\subsection{JEP Reliability Criterion}
Let $R(Q,D_1)$ be the rate-distortion function for the source distribution $Q$ and $R(Q,R_1,D_1,D_2)$ be the minimum sum rate of encoders $(f_1,f_2)$ subject to the rate constraint $R_1$ for encoder $f_1$  i.e.,
\begin{align}
R(Q,D_1)&:=\inf\limits_{Q_{\hatX_1|X}:\mathbb{E}[d_1(X,\hatX_1)]\leq D_1}I(X;\hatX_1),\label{def:R(QD1)}\\
R(Q,R_1,D_1,D_2)&:=\inf\limits_{\substack{Q_{\hatX_1\hatX_2|X}: \mathbb{E}[d_1(X,\hatX_1)]\leq D_1 \\\mathbb{E}[d_2(X,\hatX_2)]\leq D_2, I(X,\hatX_1)\leq R_1}} I(X;\hatX_1,\hatX_2). \label{def:R(QR1D1D2)}
\end{align}

Furthermore, define the following exponent functions
\begin{align}
\Lambda_1(P,\vec{R_1},D_1,\alpha)&:=\max\limits_{Q:D(Q||P)\leq\alpha}\{R(Q,D_1)-r_1\}^+, \label{def:Lambda1}\\
\Lambda_2(P,\vec{R_1},\vec{R_2},D_1,D_2,\alpha) &:=\max\limits_{Q:D(Q||P)\leq\alpha}\big\{\{R(Q,D_1)-r_1\}^+ +\{R(Q,R_1,D_1,D_2)-R(Q,D_1)-r_2\}^+\big\},\label{def:Lambda2}\\ \Lambda_2^{\mathrm{out}}(P,\vec{R_1},\vec{R_2},D_1,D_2,\alpha) &:=\max\limits_{Q:D(Q||P)\leq\alpha}\{R(Q,R_1,D_1,D_2)-r_1-r_2\}^+.\label{def:Lambda2out}
\end{align}

Finally, for DMS with distribution $P$, given any $\alpha>0$, define the following regions
\begin{align}
\calL^{\mathrm{in}}(D_1,D_2,\vec{R_1},\vec{R_2},\alpha|P) &:=\Big\{(L_1,L_2):L_1\geq \Lambda_1(P,\vec{R_1},D_1,\alpha), L_2\geq \Lambda_2(P,\vec{R_1},\vec{R_2},D_1,D_2,\alpha)\Big\},\label{def:caL:in}\\
\calL^{\mathrm{out}}(D_1,D_2,\vec{R_1},\vec{R_2},\alpha|P) &:=\Big\{(L_1,L_2):L_1\geq \Lambda_1(P,\vec{R_1},D_1,\alpha), L_2\geq \Lambda_2^{\mathrm{out}}(P,\vec{R_1},\vec{R_2},D_1,D_2,\alpha)\Big\}\label{def:caL:out}.
\end{align}

\begin{theorem}\label{theo:DMS} 
Consider the rate pair $(R_1,R_2)$ such that
\begin{align}
R_1&>\max\limits_{Q:D(Q||P)\leq\alpha}R(Q,D_1),\label{def:DMS_require_R1}\\*
R_1+R_2&>\max\limits_{Q:D(Q||P)\leq\alpha}R(Q,R_1,D_1,D_2).\label{def:DMS_require_R1R2}
\end{align}
The $(D_1,D_2,\vec{R_1},\vec{R_2},\alpha)$-achievable  maximal leakage region satisfies
\begin{align}
\calL^{\mathrm{in}}(D_1,D_2,\vec{R_1},\vec{R_2},\alpha|P) &\subseteq\calL(D_1,D_2,\vec{R_1},\vec{R_2},\alpha|P)\\
&\subseteq\calL^{\mathrm{out}}(D_1,D_2,\vec{R_1},\vec{R_2},\alpha|P).
\end{align}
\end{theorem}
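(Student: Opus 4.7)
The theorem has two parts: the inner bound $\calL^{\mathrm{in}}\subseteq\calL$ and the outer bound $\calL\subseteq\calL^{\mathrm{out}}$, each extending the point-to-point Shannon cipher system analysis of~\cite[Thm.~8]{issa2020leakage} to the two-layer successive-refinement setting.

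\emph{Achievability.} The plan is a type-partitioned two-layer code with one-time-pad key encryption. For each type $Q\in\calQ_\calX^n$ with $D(Q\|P)\leq\alpha$, I would invoke the successive-refinement type-covering lemma (e.g., Kanlis-Narayan~\cite{kanlis1996error}) to obtain a first-layer codebook $\calC_1(Q)\subseteq\hatcalX_1^n$ of size $\lceil 2^{nR(Q,D_1)}\rceil$ that $D_1$-covers $\calT_Q^n$ under $d_1$, together with refinement codebooks $\calC_2(Q,\hat x_1^n)\subseteq\hatcalX_2^n$ of size $\lceil 2^{n(R(Q,R_1,D_1,D_2)-R(Q,D_1))}\rceil$ that jointly $(D_1,D_2)$-cover $\calT_Q^n$ given each first-layer codeword. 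Conditions~\eqref{def:DMS_require_R1}-\eqref{def:DMS_require_R1R2} ensure sufficient rate for every such $Q$; for atypical $Q$ a default codeword is declared, contributing at most $\mathrm{poly}(n)\cdot 2^{-n\alpha}$ to the JEP by Sanov's theorem. The encoders transmit the type of $X^n$ ($O(\log n)$ bits) followed by the codeword labels in encrypted form: the first-layer label is bitwise XORed with (a padded version of) $K_1^n$, and the refinement label with $K_2^n$. Since the keys are uniform and independent of $X^n$,
\begin{align*}
P_{M_1\mid X^n}(m_1\mid x^n)\leq 2^{-\min(nr_1,\,nR(Q_{x^n},D_1))},
\end{align*}
so summing over messages and over types yields
\begin{align*}
\sum_{m_1}\max_{x^n}P_{M_1\mid X^n}(m_1\mid x^n)\leq\sum_{Q:D(Q\|P)\leq\alpha}\mathrm{poly}(n)\cdot 2^{n\{R(Q,D_1)-r_1\}^+},
\end{align*}
which gives $L_1\leq\Lambda_1$. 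The analogous accounting for $(M_1,M_2)$, with the refinement label contributing an independent $\{R(Q,R_1,D_1,D_2)-R(Q,D_1)-r_2\}^+$ term by the independence of $K_1^n$ and $K_2^n$, produces $L_2\leq\Lambda_2$.

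\emph{Converse.} The bound $L_1\geq\Lambda_1$ follows by regarding $(f_1,K_1^n,\phi_1)$ as a standalone point-to-point Shannon cipher system with rate $R_1$, key rate $r_1$, distortion $D_1$, and JEP exponent $\alpha$, then invoking~\cite[Thm.~8]{issa2020leakage} directly. For $L_2\geq\Lambda_2^{\mathrm{out}}$, I would adapt the guessing attack of~\cite[Thm.~8]{issa2020leakage} to two layers: given $(M_1,M_2)$, the eavesdropper enumerates all $2^{n(r_1+r_2)}$ key pairs, decodes each to a candidate $(\phi_1(M_1,k_1),\phi_2(M_1,M_2,k_1,k_2))$, and guesses $\hat X^n$ uniformly from a joint $(D_1,D_2)$-distortion ball around a randomly chosen candidate. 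By the JEP hypothesis, the true $X^n$ lies within distortion of the correct-key candidate except on an event of probability at most $2^{-n\alpha}$. The crucial point is that the successive-refinement structure forces $I(X^n;\hatX_1^n)\leq nR_1$ (since $K_1^n\perp X^n$ implies $I(X^n;\hatX_1^n)\leq I(X^n;M_1,K_1^n)=I(X^n;M_1\mid K_1^n)\leq H(M_1)\leq nR_1$), so the ball counting may be restricted to joint types $P_{X\hatX_1\hatX_2}$ with $I(X;\hatX_1)\leq R_1$; such type-restricted balls have size at most $2^{n(H(Q)-R(Q,R_1,D_1,D_2))}$. Applying the operational characterization
\begin{align*}
\rmL(X^n\to M_1M_2)\geq\log\frac{\Pr[\hat X^n=X^n]}{\max_{x^n}P^n(x^n)}
\end{align*}
and optimizing over $Q$ with $D(Q\|P)\leq\alpha$ then yields $L_2\geq\Lambda_2^{\mathrm{out}}$.

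\emph{Main obstacle.} The principal technical hurdle is the converse on $L_2$: one must simultaneously exploit the independence of the keys from $X^n$ to pin the effective joint rate-distortion function at the SR-constrained $R(Q,R_1,D_1,D_2)$ rather than the looser $R(Q,D_1,D_2)$, and then orchestrate the two-key guessing attack so that the type-restricted distortion-ball count combines correctly with the $2^{n(r_1+r_2)}$-fold key enumeration. A related subtlety worth flagging is that $\calL^{\mathrm{in}}$ and $\calL^{\mathrm{out}}$ do not coincide in general; they match precisely when the worst-case type $Q^\star$ satisfies both $R(Q^\star,D_1)\geq r_1$ and $R(Q^\star,R_1,D_1,D_2)-R(Q^\star,D_1)\geq r_2$, in which case Theorem~\ref{theo:DMS} characterizes $\calL$ exactly.
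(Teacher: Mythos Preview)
Your proposal is correct and follows essentially the same approach as the paper: type-covering plus bin-wise one-time-pad encryption for achievability, and a uniform key-guess followed by type-based source guessing for the converse, with the $L_1$ bounds reducing directly to the point-to-point result. The paper's converse on $L_2$ differs only cosmetically from yours: rather than arguing $I(X^n;\hatX_1^n)\le nR_1$ to restrict the joint types, the paper has Eve guess the joint type uniformly from the set $\{P_{X\hatX_1\hatX_2}:P_{\hatX_1\hatX_2}=Q_{\hatx_1^n\hatx_2^n},\,\mathbf{E}[d_i]\le D_i,\,R_1>R(P_X,D_1)\}$ and then invokes $I(X;\hatX_1,\hatX_2)\ge R(Q_X,R_1,D_1,D_2)$ for the realized type; you correctly flag this step as the principal obstacle.
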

The achievability (inner bound) and the converse (outer bound) proofs of Theorem \ref{theo:DMS} are provided in Sections~\ref{sec:Ach_TheoremDMS} and \ref{sec:Conv_TheoremDMS}, respectively.  We make the following remarks.

\begin{remark}
To prove the achievability part of Theorem \ref{theo:DMS}, we propose a type-based coding scheme using bitwise encryption, upper bound the normalized maximal leakage pairs by generalizing~\cite[Section IV-E]{issa2020leakage} and show that JEP decays exponentially fast using the method of types~\cite{csiszar1998mt} and the type covering lemma for successive refinement~\cite[Lemma 1]{kanlis1996error},~\cite[Lemma 8]{no2016}. To prove the converse part of Theorem~\ref{theo:DMS}, we derive a lower bound for the normalized maximal leakage between the source sequence $X^n$ and the messages $M_1$ and $M_2$. Specifically, inspired by~\cite[Section IV-E]{issa2020leakage}, we propose a guessing scheme for Eve and generalize~\cite[Lemma 5]{issa2017guess} to the successive refinement setting and bound the probability of correctly guessing the source sequence by Eve under the JEP constraint.
\end{remark}

\begin{remark}
The achievability part of Theorem~\ref{theo:DMS} generalizes the achievability part of~\cite[Theorem 8]{issa2020leakage} to the successive refinement setting and reveals the fundamental tradeoff between reliability and secrecy. For ease of notation, given $(P,\vec{R_1},\vec{R_2},D_1,D_2)$, we use $\Lambda_1(\alpha)$ and $\Lambda_2(\alpha)$ to denote $\Lambda_1(P,\vec{R_1},D_1,\alpha)$ and $\Lambda_2(P,\vec{R_1},\vec{R_2},D_1,D_2,\alpha)$, respectively. Note that $\Lambda_i(\alpha)$ is a non-decreasing function of $\alpha$ for each $i\in[2]$. Thus, generally speaking, a looser reliability constraint with a smaller JEP exponent $\alpha$ leads to better secrecy guarantee with less information leakage. Furthermore, there exists a floor effect, where the secrecy guarantee remains unchanged if the reliability constraint $\alpha$ is above a certain threshold. To illustrate, for each $i\in[2]$, let $\alpha_i^*$ be the minimum $\alpha_i\in\bbR_+$ such that for all $\alpha\geq\alpha_i$,
\begin{align}
\Lambda_i(\alpha)=\Lambda_i(\alpha_i).
\end{align}
If $\alpha\geq\alpha_i^*$, $\Lambda_i(\alpha)$ remains unchanged. This implies that, when the reliability constraint $\alpha$ is above a certain threshold, regardless of the reliability constraint, the normalized maximal leakage remains the same. We provide a numerical example to further illustrate this point in Remark~\ref{remark:numerical_example}.
\end{remark}

\begin{remark}
The converse part of Theorem~\ref{theo:DMS} generalizes the converse part of~\cite[Theorem 8]{issa2020leakage} to the successive refinement setting. The remark of the inner bound of Theorem~\ref{theo:DMS} is also valid for the outer bound, with a slight change where $\Lambda_2$ is replaced by $\Lambda_2^{\mathrm{out}}$.
\end{remark}

As shown in the following corollary, our achievability and converse bounds match under mild conditions.
\begin{corollary}\label{theo:coro_JEP_match}
Consider key rate pairs $(r_1,r_2)$ such that for all $Q$ satisfying $D(Q||P)\leq\alpha$,
\begin{align}
R(Q,D_1)&\geq r_1, \label{DMScoro:condition_r1}\\
R(Q,R_1,D_1,D_2)-R(Q,D_1)&\geq r_2 \label{DMScoro:condition_r2}.
\end{align}
It follows that
\begin{align}
\calL^{\mathrm{in}}(D_1,D_2,\vec{R_1},\vec{R_2},\alpha|P) =\calL^{\mathrm{out}}(D_1,D_2,\vec{R_1},\vec{R_2},\alpha|P).
\end{align}
\end{corollary}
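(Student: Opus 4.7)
The plan is to observe that both regions $\calL^{\mathrm{in}}$ and $\calL^{\mathrm{out}}$ impose the same lower bound $\Lambda_1(P,\vec{R_1},D_1,\alpha)$ on $L_1$, so their equality reduces to showing that $\Lambda_2(P,\vec{R_1},\vec{R_2},D_1,D_2,\alpha) = \Lambda_2^{\mathrm{out}}(P,\vec{R_1},\vec{R_2},D_1,D_2,\alpha)$ under hypotheses \eqref{DMScoro:condition_r1}--\eqref{DMScoro:condition_r2}. This is essentially an algebraic collapse of the three positive-part operators appearing in \eqref{def:Lambda2}--\eqref{def:Lambda2out}, and will not require any new coding-theoretic input.

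First I would fix an arbitrary $Q$ with $D(Q\|P)\leq\alpha$ and evaluate the integrand of $\Lambda_2$ pointwise. Hypothesis \eqref{DMScoro:condition_r1} gives $R(Q,D_1)-r_1\geq 0$, so the first positive-part operator in \eqref{def:Lambda2} is trivial and equals $R(Q,D_1)-r_1$. Hypothesis \eqref{DMScoro:condition_r2} gives $R(Q,R_1,D_1,D_2)-R(Q,D_1)-r_2\geq 0$, so the second positive-part operator collapses to $R(Q,R_1,D_1,D_2)-R(Q,D_1)-r_2$. Adding the two produces the telescoped quantity $R(Q,R_1,D_1,D_2)-r_1-r_2$, which is itself non-negative as a sum of two non-negative terms. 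Consequently the positive-part operator in \eqref{def:Lambda2out} is also trivial and yields the same expression, so the integrands of $\Lambda_2$ and $\Lambda_2^{\mathrm{out}}$ coincide at $Q$.

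Since the two integrands agree pointwise for every admissible $Q$, taking the supremum over $\{Q:D(Q\|P)\leq\alpha\}$ on both sides gives $\Lambda_2(P,\vec{R_1},\vec{R_2},D_1,D_2,\alpha)=\Lambda_2^{\mathrm{out}}(P,\vec{R_1},\vec{R_2},D_1,D_2,\alpha)$, and combining this with the shared $\Lambda_1$ bound on $L_1$ yields $\calL^{\mathrm{in}}=\calL^{\mathrm{out}}$.

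I do not anticipate any real obstacle; the only subtle point to be careful about is that hypotheses \eqref{DMScoro:condition_r1}--\eqref{DMScoro:condition_r2} are required to hold for \emph{every} $Q$ in the KL ball rather than only at the maximizer, so that the three positive-part simplifications are simultaneously valid at every $Q$ and the equality survives the supremum. This is exactly how the corollary is stated, so the argument goes through cleanly.
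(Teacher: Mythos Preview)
Your proposal is correct and is exactly the intended argument: the paper does not spell out a proof for this corollary, treating it as an immediate algebraic consequence of the definitions \eqref{def:Lambda2}--\eqref{def:Lambda2out}, and your pointwise collapse of the positive parts under hypotheses \eqref{DMScoro:condition_r1}--\eqref{DMScoro:condition_r2} followed by taking the maximum is precisely that verification.
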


\begin{remark}\label{remark:condition_partial_secrecy}
The conditions in Eq.~\eqref{DMScoro:condition_r1} and \eqref{DMScoro:condition_r2} are mild since the key rates $r_1$ and $r_2$ are usually limited. It follows from the codebook design in Section~\ref{sec:DMS_ach_codebook} of the achievability proof that given a type $Q$ of the source sequence $X^n$, the number of codewords used by the first encoder is roughly upper bounded by $2^{R(Q,D_1)}$. Thus,  Eq.~\eqref{DMScoro:condition_r1} implies that the key rate $r_1$ will not exceed the rate $R_1$ of the first encoder under the JEP constraint. The above statement also holds for $r_2$ similarly since the number of codewords used by the second encoder is roughly upper bounded by $2^{R(Q,R_1,D_1,D_2)-R(Q,D_1)}$. In other words, Eq.~\eqref{DMScoro:condition_r1} and \eqref{DMScoro:condition_r2} correspond to partial secrecy. As a sanity check, consider a Bernoulli source with distribution $P=\mathrm{Bern}(0.4)$ under Hamming distortion measures. When $D_1=0.2$, $D_2=0.15$ and $\alpha=0.03$. the conditions on key rates are $r_1\leq 0.162$ and $r_2\leq 0.112$.
\end{remark}

\begin{remark}\label{remark:successive_refinability}
Under the conditions of Corollary \ref{theo:coro_JEP_match}, it follows that for each $i\in[2]$, $\lim_{\alpha\to\infty}\Lambda_i(\alpha)=\Lambda_i(\alpha_i^*)$ and $\Lambda_2(\alpha_2^*)\geq \Lambda_1(\alpha_1^*)$. This way, we can discuss the \emph{successive refinability} of the maximal leakage pair. For successive refinement, a source-distortion triplet is said to be successively refinable if one can simultaneously achieve the minimal compression rates for both decoders as if the compression is done  separately~\cite{equitz1991successive,koshelev1981estimation}, i.e., $R(P,R(P,D_1),D_1,D_2)=R(P,D_2)$ for all $P\in\calP(\calX)$. If the source-distortion measure triplet is successively refinable, it follows that $\Lambda_2^{\mathrm{out}}(P,\vec{R_1},\vec{R_2},D_1,D_2,\alpha)=\Lambda_1(P,\vec{R_1}+\vec{R_2},D_2,\alpha)$. Note that $\Lambda_1(P,\vec{R_1}+\vec{R_2},D_2,\alpha)$ is the minimal normalized maximal leakage when one aims to achieve the distortion level $D_2$ in point-to-point SCS~\cite[Theorem 8]{issa2020leakage}. Thus, the above result implies that the proposed scheme in the successive refinement setting of SCS has the same secrecy and reliable performance as the point-to-point SCS setting under the same distortion level and the same excess-distortion probability constraint. In other words, successive refinability for the pure source coding problem extends to the SCS setting under maximal leakage.
\end{remark}

\begin{remark}\label{remark:numerical_example}
Consider a DMS with distribution $P=\mathrm{Bern}(p)$ under Hamming distortion measures. Such a source distortion triple is successively refinable~\cite{koshelev1981estimation,equitz1991successive}, i.e., $R(P,R(P,D_1),D_1,D_2)=R(P,D_2)$. It follows that $\alpha^*:=\alpha_1^*=\alpha_2^*$. This is because for any $D\in\bbR_+$, the optimization problem
\begin{align}
\max_{Q:D(Q\|P)\leq \alpha} R(Q,D)=\max_{Q:D_b(q\|p)\leq \alpha} H_b(q)-H_b(D) \label{equi:binary_RQD}
\end{align}
has the same maximizer, where $H_b(q):=-q\log q-(1-q)\log (1-q)$ denotes the binary entropy and $D_b(q\|p):=q\log\frac{q}{p}+(1-q)\log\frac{1-q}{1-p}$ denotes the binary relative entropy. Thus, the achievable maximal leakage region satisfies
\begin{align}
\Lambda_1(\alpha)&=\max\limits_{q:D_b(q\|p)\leq\alpha}\{H_b(q)-H_b(D_1)-r_1\}^+,\\
\Lambda_2(\alpha)&=\max\limits_{q:D_b(q\|p)\leq\alpha}\{H_b(q)-H_b(D_2)-r_1-r_2\}^+.
\end{align}
Note that $H_\rmb(q)$ achieves the maximum value of $1$ when $q=0.5$. Thus, when $q=0.5$ is feasible in both optimization problems above, the values of $\Lambda_1(\alpha)$ and $\Lambda_2(\alpha)$ remain unchanged. In turn, this requires that $\alpha\geq \alpha^*=D_b(0.5\|p)$. In Fig. \ref{fig:L1L2_plot}, we numerically illustrate $(\Lambda_1(\alpha),\Lambda_2(\alpha))$ when $p=0.3$. As observed from Fig. \ref{fig:L1L2_plot}, $\Lambda_i(\alpha)$ increases in $\alpha$ when $\alpha<\alpha^*$ and converges when $\alpha\geq \alpha^*$. The converged values of $\Lambda_i(\alpha)$ satisfies that $\Lambda_1(\alpha^*)=H_b(0.5)-H_b(D_1)-r_1$ and $\Lambda_2(\alpha^*)=H_b(0.5)-H_b(D_2)-r_1-r_2$. In Fig.~\ref{fig:L1_surf_tangent}, we plot $\Lambda_1(\alpha)$ for various values of the reliability constraint $\alpha$ and the key rate $r_1$. In Fig. \ref{fig:L1_surf_tangent}, we also plot the slice of the 3-D surface of $\Lambda_1(\cdot)$ for a fixed key rate $r_1=0.06$, which corresponds to the blue curve in Fig.~\ref{fig:L1L2_plot}.
\end{remark}

\begin{figure}
\centering
\includegraphics[width=.5\columnwidth]{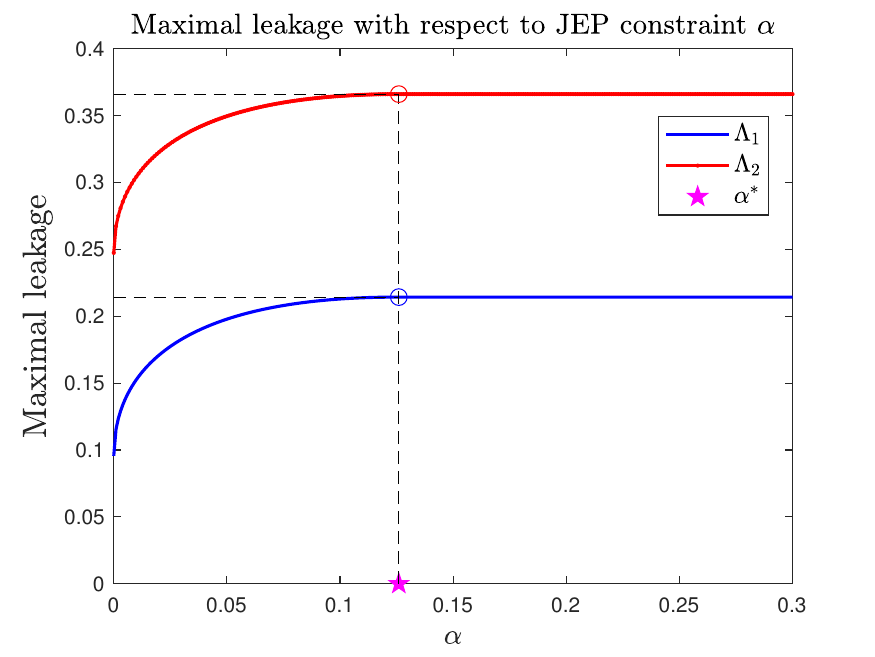}
\caption{Illustration of the boundaries of maximal leakage region $(\Lambda_1,\Lambda_2)$ with respect to the JEP constraint $\alpha$ for $P=\mathrm{Bern}(0.3)$, $D_1=0.2$, $D_2=0.1$, $r_1=0.06$ and $r_2=0.1$.}
\label{fig:L1L2_plot}
\end{figure}

\begin{figure}[tb]
\centering
\includegraphics[width=.5\columnwidth]{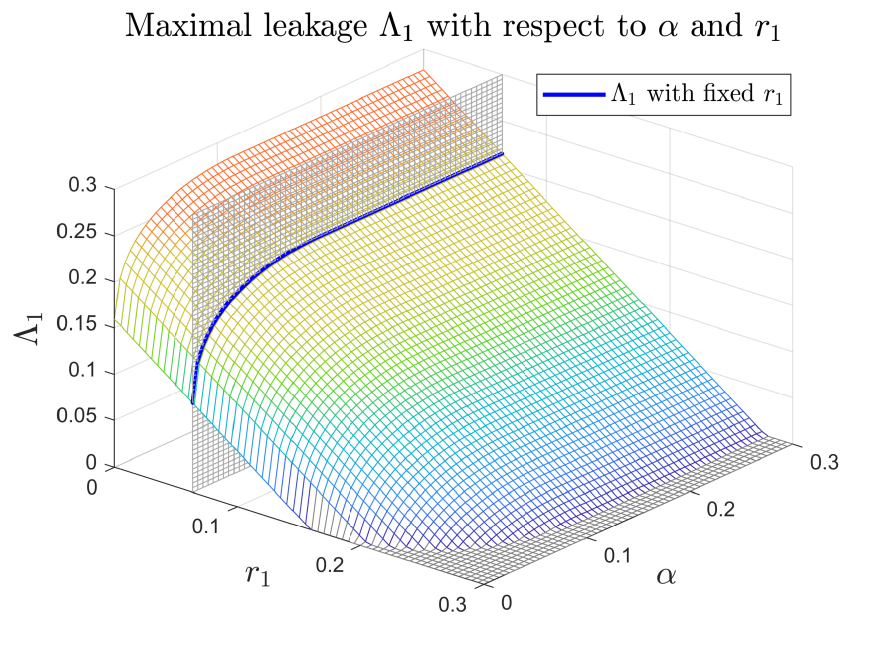}
\caption{Illustration of $\Lambda_1$ with respect to $\alpha$ and $r_1$ for $P=\mathrm{Bern}(0.3)$ and $D_1=0.2$. The slice of $\Lambda_1$ is determined by a fixed key rate $r_1=0.06$.}
\label{fig:L1_surf_tangent}
\end{figure}


\subsection{Expected Distortion Reliability Criterion}
In this section, we consider the expected distortion reliability criterion. Define the following exponent functions,
\begin{align}
\Omega_1(P,\vec{R_1},D_1)&:=\{R(P,D_1)-r_1\}^+,\\
\Omega_2(P,\vec{R_1},\vec{R_2},D_1,D_2)&:=\{R(P,D_1)-r_1\}^+ +\{R(P,R_1,D_1,D_2)-R(P,D_1)-r_2\}^+,\\
\Omega_2^{\rm{out}}(P,\vec{R_1},\vec{R_2},D_1,D_2)&:=\{R(P,R_1,D_1,D_2)-r_1-r_2\}^+.
\end{align}

Finally, for DMS with distribution $P$, define the following regions
\begin{align}
\calL_{\rm{exp}}^{\rm{in}}(D_1,D_2,\vec{R_1},\vec{R_2}|P) &:=\Big\{(L_1,L_2):L_1\geq \Omega_1(P,\vec{R_1},D_1), L_2\geq \Omega_2(P,\vec{R_1},\vec{R_2},D_1,D_2)\Big\},\label{calL;avg:in}\\
\calL_{\rm{exp}}^{\mathrm{out}}(D_1,D_2,\vec{R_1},\vec{R_2}|P) &:=\Big\{(L_1,L_2):L_1\geq \Omega_1(P,\vec{R_1},D_1), L_2\geq \Omega_2^{\mathrm{out}}(P,\vec{R_1},\vec{R_2},D_1,D_2)\Big\}\label{calL;avg:out}.
\end{align}

Our achievability result states as follows.
\begin{theorem}\label{theo:DMS_expectation}
Consider the rate pair $(R_1,R_2)$ such that
\begin{align}
R_1&>R(P,D_1), \label{DMSexpect:req_R1}\\*
R_1+R_2&>R(P,R_1,D_1,D_2).\label{DMSexpect:req_R1R2}
\end{align}
The $(D_1,D_2,\vec{R_1},\vec{R_2})$-achievable maximal leakage region satisfies
\begin{align}
\calL_{\rm{exp}}^{\rm{in}}(D_1,D_2,\vec{R_1},\vec{R_2}|P) &\subseteq\calL_{\rm{exp}}(D_1,D_2,\vec{R_1},\vec{R_2}|P)\\
&\subseteq\calL_{\rm{exp}}^{\mathrm{out}}(D_1,D_2,\vec{R_1},\vec{R_2}|P).
\end{align}
\end{theorem}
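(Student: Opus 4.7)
My plan is to establish the inner and outer bounds of Theorem~\ref{theo:DMS_expectation} by two different routes: the inner bound is obtained by specializing the JEP achievability scheme of Theorem~\ref{theo:DMS} to the true source type $P$, while the outer bound requires a fresh extension of the Schieler--Cuff rate--equivocation machinery to the successive refinement setting.

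For the inner bound, I would reuse the two-layer type-based coding scheme from Section~\ref{sec:Ach_TheoremDMS} but analyze it only at type $P$. Concretely, the first encoder covers the strongly typical set with roughly $2^{nR(P,D_1)}$ codewords within distortion $D_1$, and, conditional on each first-layer codeword, the second encoder uses roughly $2^{n(R(P,R_1,D_1,D_2)-R(P,D_1))}$ refinement codewords to achieve distortion $D_2$ under the rate constraint $R_1$ on the first encoder. Atypical source sequences are sent to a default reconstruction; because the distortion measures are bounded and the atypical probability is exponentially small, the contribution to the expected distortions vanishes and \eqref{avg:const1}--\eqref{avg:const2} are met. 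The two encoder indices are masked with $K_1^n$ and $K_2^n$ via bitwise one-time-pads. Counting the ciphertexts reachable from any fixed source sequence and applying Definition~\ref{def:maximal_leakage_DMS} yields the leakage exponents $\Omega_1(P,\vec{R_1},D_1)$ and $\Omega_2(P,\vec{R_1},\vec{R_2},D_1,D_2)$; the $\max_{Q:D(Q\|P)\le\alpha}$ that appears in the JEP analysis disappears because only the codebook designed for type $P$ is used with non-negligible probability.

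For the outer bound, a direct invocation of the point-to-point converse~\cite[Theorem 9]{issa2020leakage} is unavailable, since that proof relies on the Schieler--Cuff rate--equivocation theorem~\cite[Corollary 5]{schieler2014}, which is only formulated for point-to-point lossy compression. My plan is therefore, first, to prove a successive-refinement analogue of that theorem with two shared keys and causal source disclosure to the eavesdropper, and second, to convert equivocation into a maximal-leakage lower bound. In the first step I would combine Rimoldi's successive-refinement converse with a standard time-sharing auxiliary and the uniformity of the keys, obtaining a single-letter region in which the mutual-information terms $I(X;\hatX_1)$ and $I(X;\hatX_1,\hatX_2)$ appear together with the constraint $I(X;\hatX_1)\le R_1$ that defines $R(P,R_1,D_1,D_2)$, and the key rates $r_1,r_2$ enter as additive equivocation slack. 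In the second step I would identify $\rmL(X^n\to M)$ with the Sibson mutual information of infinite order~\cite[Theorem 1]{issa2020leakage}, mirroring the translation in~\cite[Section V]{issa2020leakage}, to deduce $\Omega_1$ and $\Omega_2^{\mathrm{out}}$ as lower bounds on $\frac{1}{n}\rmL(X^n\to M_1)$ and $\frac{1}{n}\rmL(X^n\to M_1M_2)$, respectively.

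The main obstacle will be the successive-refinement rate--equivocation step. Rimoldi's two-stage auxiliary construction must be pushed through a causal-disclosure argument while simultaneously bookkeeping two independent keys and two separately observable ciphertexts; in particular, the auxiliary playing the role of $\hatX_1$ must satisfy the embedded rate constraint $I(X;\hatX_1)\le R_1$ and at the same time support equivocation bounds both for the eavesdropper who observes only $M_1$ and for the one who observes $(M_1,M_2)$. I expect to handle this by introducing a time-sharing variable, verifying convexity of the resulting region, and checking that the causal-disclosure conditioning collapses cleanly to the claimed single-letter expressions when the reliability constraint is in expectation rather than JEP.
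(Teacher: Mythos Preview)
Your proposal is correct and follows essentially the same route as the paper: the inner bound is obtained by specializing the JEP achievability of Theorem~\ref{theo:DMS} to $\alpha\to 0$ (equivalently, to the true type $P$), and the outer bound is obtained by first extending the Schieler--Cuff causal-disclosure rate--equivocation converse~\cite[Theorem 1, Corollary 5]{schieler2014} to the successive refinement setting (the paper's Lemma~\ref{theo:lemma_cuff_theo1} and Lemma~\ref{theo:lemma_DMS_expect_corollary5}) and then invoking $\rmL=I_\infty\ge I$ to pass from equivocation to maximal leakage. The auxiliary construction you anticipate---$U_{1,j}=(M_1,W_\alpha^{j-1})$, $U_{2,j}=(M_2,W_\beta^{j-1})$, $V_1=K_1$, $V_2=K_2$, together with a time-sharing variable $J$---is exactly what the paper uses, and the log-loss specialization with $W_{0,i}=X_i$ and trivial $W_{1,i},W_{2,i}$ is how the causal-disclosure region collapses to the equivocation form.
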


The proof of Theorem~\ref{theo:DMS_expectation} is provided in Section~\ref{sec:Conv_DMS_expectation}. We make a few remarks.

\begin{remark}
We can compare the achievable normalized maximal leakage regions under JEP and expected distortion reliability constraints. The achievability proof of Theorem~\ref{theo:DMS_expectation} is implied by Theorem~\ref{theo:DMS} with JEP reliability constraint. Thus, expected distortion is a weaker reliability constraint than JEP. One might wonder whether such a weaker constraint leads to a better secrecy guarantee, i.e., for fixed $(D_1,D_2,\vec{R_1},\vec{R_2})$, whether the normalized maximal leakage region under expected distortion is a strict subset of the corresponding region under JEP. \emph{Counter-intuitively}, our result provides a negative answer. Consider a Bernoulli source distribution with parameter $p=0.5$, i.e., $P=\mathrm{Bern}(0.5)$. It follows from Eq. \eqref{equi:binary_RQD} that $\alpha_1^*=\alpha_2^*=0$. It follows that $\Omega_1(P,\vec{R_1},D_1)=\Lambda_1(P,\vec{R_1},D_1,\alpha)$ and $\Omega_2(P,\vec{R_1},\vec{R_1},D_1,D_2)=\Lambda_2(P,\vec{R_1},\vec{R_1},D_1,D_2,\alpha)$ for any $\alpha>0$. Therefore, the achievable maximal leakage region under both JEP and expected distortion constraints are identical in this case.
\end{remark}

\begin{remark}
To prove the converse part, inspired by~\cite[Theorem 9]{issa2020leakage}, we first lower bound normalized maximal leakage via normalized mutual information and then further lower bound mutual information by deriving an upper bound on equivocation. Our main contribution is to establish the converse results for the successive refinement setting of SCS under the equivocation secrecy measure. To do so, we generalize the converse proof of the rate-distortion equivocation region~\cite[Theorem 1 and Corollary 5, part 1]{schieler2014} to the successive refinement setting, which is detailed in Section~\ref{sec:Conv_DMS_proof_lemma_converse_DMS}.
\end{remark}

Analogous to Corollary \ref{theo:coro_JEP_match}, under the following conditions, our bounds under expected distortion match.
\begin{corollary}\label{theo:coro_expected_match}
If the key rate pairs $(r_1,r_2)$ satisfy
\begin{align}
R(P,D_1)&\geq r_1,\label{DMSexpcoro:condition_r1}\\
R(P,R_1,D_1,D_2)-R(P,D_1)&\geq r_2, \label{DMSexpcoro:condition_r2}
\end{align}
it follows that
\begin{align}
\calL_{\rm{exp}}^{\mathrm{in}}(D_1,D_2,\vec{R_1},\vec{R_2}|P) =\calL_{\rm{exp}}^{\mathrm{out}}(D_1,D_2,\vec{R_1},\vec{R_2}|P).
\end{align}
\end{corollary}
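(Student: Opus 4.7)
The plan is to show that the two conditions on the key rates allow us to strip off all the $\{\cdot\}^+$ operations in $\Omega_2$ and simultaneously guarantee non-negativity of the argument inside the $\{\cdot\}^+$ appearing in $\Omega_2^{\rm{out}}$, so that both $\Omega_2$ and $\Omega_2^{\rm{out}}$ collapse to the same expression $R(P,R_1,D_1,D_2)-r_1-r_2$. Since $\Omega_1$ is identical in the definitions of $\calL_{\rm{exp}}^{\rm{in}}$ and $\calL_{\rm{exp}}^{\rm{out}}$, proving $\Omega_2=\Omega_2^{\rm{out}}$ will suffice.

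Concretely, I would first invoke condition~\eqref{DMSexpcoro:condition_r1} to conclude $\{R(P,D_1)-r_1\}^+ = R(P,D_1)-r_1$, and then invoke condition~\eqref{DMSexpcoro:condition_r2} to conclude $\{R(P,R_1,D_1,D_2)-R(P,D_1)-r_2\}^+ = R(P,R_1,D_1,D_2)-R(P,D_1)-r_2$. Substituting these into the definition of $\Omega_2(P,\vec{R_1},\vec{R_2},D_1,D_2)$, the $R(P,D_1)$ terms cancel and I obtain
\begin{equation}
\Omega_2(P,\vec{R_1},\vec{R_2},D_1,D_2) = R(P,R_1,D_1,D_2)-r_1-r_2.
\end{equation}

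Next, I would verify that $R(P,R_1,D_1,D_2)-r_1-r_2\geq 0$ so that the outer-bound exponent satisfies $\Omega_2^{\rm{out}}(P,\vec{R_1},\vec{R_2},D_1,D_2) = R(P,R_1,D_1,D_2)-r_1-r_2$ as well. This follows by chaining the two hypotheses: $R(P,R_1,D_1,D_2) \geq R(P,D_1)+r_2 \geq r_1+r_2$, where the first inequality uses~\eqref{DMSexpcoro:condition_r2} and the second uses~\eqref{DMSexpcoro:condition_r1}. Hence $\Omega_2 = \Omega_2^{\rm{out}}$, which together with the identical first component $\Omega_1$ yields $\calL_{\rm{exp}}^{\rm{in}}(D_1,D_2,\vec{R_1},\vec{R_2}|P) = \calL_{\rm{exp}}^{\rm{out}}(D_1,D_2,\vec{R_1},\vec{R_2}|P)$ by the definitions in~\eqref{calL;avg:in} and~\eqref{calL;avg:out}.

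The argument is essentially algebraic, so there is no substantial obstacle; the only subtlety worth flagging is the sanity check of non-negativity for $\Omega_2^{\rm{out}}$, since without it one could not conclude that the outer bound agrees with the simplified $\Omega_2$. This mirrors precisely the structure of the proof of Corollary~\ref{theo:coro_JEP_match} under JEP, with the worst-case type $Q$ replaced by the true source distribution $P$.
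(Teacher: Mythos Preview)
Your proof is correct and is exactly the elementary algebraic verification the paper has in mind; the paper does not spell out a proof for this corollary (nor for Corollary~\ref{theo:coro_JEP_match}), treating it as immediate from the definitions once the $\{\cdot\}^+$ operations become vacuous under \eqref{DMSexpcoro:condition_r1}--\eqref{DMSexpcoro:condition_r2}. Your check that $R(P,R_1,D_1,D_2)-r_1-r_2\ge 0$ by summing the two hypotheses is precisely the point needed to collapse $\Omega_2^{\mathrm{out}}$ as well.
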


\begin{remark}
The conditions in Eq. \eqref{DMSexpcoro:condition_r1} and \eqref{DMSexpcoro:condition_r2} are mild and correspond to partial secrecy. The maximal leakage is also successively refinable under expected distortion for successively refinable source-distortion triplets.
\end{remark}

\section{Achievability Proof of Theorem \ref{theo:DMS}}
\label{sec:Ach_TheoremDMS}
To establish the achievability part of Theorem \ref{theo:DMS}, we need to design a coding scheme satisfying the JEP constraint and characterize the normalized maximal leakage region of the proposed scheme. Firstly, we specify the codebook design. Subsequently, we explain the encoding and decoding scheme and prove the proposed coding scheme satisfies the JEP constraint. Finally, we analyze normalized maximal leakage between the source sequence and compressed messages for the coding scheme.

\subsection{Type Covering Lemma}\label{sec:DMS_ach_typecoveringlemma}
To state our encoding scheme, we recall the type covering lemma for successive refinement source coding in~\cite[Lemma 1]{kanlis1996error},~\cite[Lemma 8]{no2016},~\cite[Lemma 16]{zhou2016second}.
\begin{lemma}\label{theo:lemma_SR_DMS}
Define two constants:
\begin{align}
c_1&=4|\calX|\cdot|\hat{\calX}_1|+9,\label{def:c1}\\
c_2&=6|\calX|\cdot|\hat{\calX}_1|\cdot|\hat{\calX}_2|+2|\calX|\cdot|\hat{\calX}_1|+17. \label{def:c2}
\end{align}
Given type $Q_{X}\in\calQ_\calX^n$, for all $\tilde{R}_1\geq R(Q_X,D_1)$, the following holds:
\begin{itemize}
\item There exist a set $\calB_Y(Q_X)\subset\hat\calX_1^n$ such that
\begin{align}
\frac{1}{n}\log|\calB_Y(Q_X)|\leq \tilde{R}_1+c_1\frac{\log n}{n}
\end{align}
and $\calB_Y(Q_X)$ $D_1$-covers $\calT_{Q_X}^n$, i.e.,
\begin{align}
\calT_{Q_X}^n\subset\bigcup_{y^n\in\calB_Y(Q_X)}\calN_1(y^n,D_1),
\end{align}
where
\begin{align}
\calN_1(y^n,D_1):=\big\{x^n:d_1(x^n,y^n)\leq D_1\big\}.
\end{align}
\item For each $x^n\in\calT_{Q_X}^n$ and each $y^n\in\calB_Y(Q_X)$, there exists a set $\calB_Z(y^n)\subset\hat\calX_2^n$ such that
\begin{align}
\frac{1}{n}\log\left(\sum_{y^n\in\calB_Y(Q_X)}|\calB_Z(y^n)|\right) \leq R(Q_X,R_1,D_1,D_2)+c_2\frac{\log n}{n}
\end{align}
and $\calB_Z(y^n)$ $D_2$-covers $\calN_1(y^n,D_1)$, i.e.,
\begin{align}
\calN_1(y^n,D_1)\subset\bigcup_{z^n\in\calB_Z(y^n)}\calN_2(z^n,D_2),
\end{align}
where
\begin{align}
\calN_2(z^n,D_2):=\big\{x^n:d_2(x^n,z^n)\leq D_2\big\}.
\end{align}
\end{itemize}
\end{lemma}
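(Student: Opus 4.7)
The plan is to establish both claims via a two-stage random coding construction combined with method-of-types counting, a direct extension of the classical type-covering argument for single-resolution lossy compression to the two-resolution setting. The argument is purely combinatorial: one exhibits, with positive probability, a random codebook that satisfies the required covering properties, and then invokes the probabilistic method to extract a deterministic codebook.

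For the first stage, I fix the source type $Q_X$ and choose a conditional distribution $Q_{\hat{X}_1|X}^\star$ that attains the infimum in $R(Q_X,D_1)$, with induced marginal $Q_{\hat{X}_1}^\star = Q_X\circ Q_{\hat{X}_1|X}^\star$. I then sample first-layer codewords uniformly and independently from the type class $\calT_{Q_{\hat{X}_1}^\star}^n$, drawing $N_1$ codewords with $N_1$ of order $2^{n\tilde{R}_1}\cdot n^{O(|\calX||\hat{\calX}_1|)}$. For any $x^n\in\calT_{Q_X}^n$, the probability that a single random codeword $\hat{X}_1^n$ drawn uniformly from $\calT_{Q_{\hat{X}_1}^\star}^n$ produces the joint type $Q_X\circ Q_{\hat{X}_1|X}^\star$ with $x^n$ is at least $2^{-nI(Q_X;Q_{\hat{X}_1|X}^\star)}/\mathrm{poly}(n) = 2^{-nR(Q_X,D_1)}/\mathrm{poly}(n)$, and on this event $d_1(x^n,\hat{X}_1^n)\le D_1$. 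A union bound over $x^n\in\calT_{Q_X}^n$, combined with the elementary inequality $(1-p)^{N_1}\le e^{-N_1 p}$, shows that the random codebook $D_1$-covers $\calT_{Q_X}^n$ with positive probability, and the polynomial prefactors collapse into the additive constant $c_1\log n$ in the size bound.

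For the second stage, I condition on a successful first-layer codebook $\calB_Y(Q_X)$, fix the distribution $Q_{\hat{X}_1\hat{X}_2|X}^{\star\star}$ attaining the infimum in $R(Q_X,R_1,D_1,D_2)$, and for each $y^n\in\calB_Y(Q_X)$ sample second-layer codewords $z^n$ independently from the conditional type class $\calT_{Q_{\hat{X}_2|\hat{X}_1}^{\star\star}(\cdot\mid y^n)}^n$, allocating a number of codewords proportional to the size of $\calN_1(y^n,D_1)\cap\calT_{Q_X}^n$. For any $x^n\in\calN_1(y^n,D_1)$ the conditional probability that a random $z^n$ produces the right joint type $Q_{X\hat{X}_1\hat{X}_2}^{\star\star}$ with $(x^n,y^n)$ is lower bounded by $2^{-nI(X;\hat{X}_2|\hat{X}_1)}/\mathrm{poly}(n)$, and on the corresponding event $d_2(x^n,z^n)\le D_2$. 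Summing the per-$y^n$ codebook sizes across $\calB_Y(Q_X)$ and union-bounding over the polynomially many joint types that can arise yields the claimed total size bound $2^{n(R(Q_X,R_1,D_1,D_2)+c_2\log n/n)}$.

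The main obstacle is the careful accounting of polynomial correction factors so that they fold exactly into the advertised constants $c_1$ and $c_2$. The factors $|\calX||\hat{\calX}_1|$ and $|\calX||\hat{\calX}_1||\hat{\calX}_2|$ respectively track the number of joint types on the pair and triple alphabets, and each appeal to type-class size bounds of the form $|\calT_Q^n|=2^{nH(Q)+O(\log n)}$ introduces an additional logarithmic term that must be tracked carefully across the two stages. A subtler issue is that for a given $y^n\in\calB_Y(Q_X)$, different $x^n\in\calN_1(y^n,D_1)$ may realize different joint types with $y^n$, so the single conditional distribution $Q_{\hat{X}_2|\hat{X}_1}^{\star\star}$ must be shown to simultaneously cover all such conditional type classes; this is where enumerating conditional types and invoking the rate constraint $I(X;\hat{X}_1)\le R_1$ built into the definition of $R(Q_X,R_1,D_1,D_2)$ becomes essential to guarantee consistency between the two stages of the construction.
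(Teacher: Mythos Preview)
The paper does not actually prove this lemma: it is stated as a recalled result, with the proof deferred to the cited references \cite[Lemma~1]{kanlis1996error}, \cite[Lemma~8]{no2016}, and \cite[Lemma~16]{zhou2016second}. So there is no ``paper's own proof'' to compare against; your proposal is an attempt to reconstruct the argument that appears in those external sources.

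Your overall plan---two-stage random covering plus method-of-types bookkeeping---is indeed the approach those references take, and the first stage is the standard rate--distortion type-covering argument. However, the second stage as you describe it has a genuine gap. You build $\calB_Y(Q_X)$ from the marginal of the \emph{single-letter} optimizer $Q_{\hat X_1|X}^\star$ for $R(Q_X,D_1)$, but then attempt to cover using the conditional $Q_{\hat X_2|\hat X_1}^{\star\star}$ coming from the \emph{joint} optimizer for $R(Q_X,R_1,D_1,D_2)$. These two distributions need not have the same $\hat X_1$-marginal, so the codewords $y^n$ produced in stage one will in general not lie in the type class on which your stage-two conditional sampling is predicated. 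You flag this as a ``subtler issue,'' but the fix you sketch---enumerating conditional types and invoking the constraint $I(X;\hat X_1)\le R_1$---does not by itself resolve the mismatch: knowing $I(X;\hat X_1)\le R_1$ under $Q^{\star\star}$ tells you nothing about the joint type actually realized between $x^n$ and a codeword $y^n$ drawn according to $Q^\star$.

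The standard resolution (as in the cited references) is to build \emph{both} layers from the single joint optimizer $Q_{X\hat X_1\hat X_2}^{\star\star}$: the first-layer codebook is drawn from its $\hat X_1$-marginal (which has $I(X;\hat X_1)\le R_1$ by the constraint in the definition of $R(Q_X,R_1,D_1,D_2)$, hence size at most $\tilde R_1$ suffices when $\tilde R_1\ge R_1\ge R(Q_X,D_1)$), and then for each $y^n$ one covers each conditional type class $\calT^n_{Q_{X|\hat X_1}}(y^n)$ separately by a sub-codebook drawn from the corresponding conditional $\hat X_2$-type, summing over the polynomially many conditional types at the end. This is what produces the extra $|\calX||\hat\calX_1|$ term in $c_2$ and makes the total-size bound $R(Q_X,R_1,D_1,D_2)$ rather than $R(Q_X,D_1)+I(X;\hat X_2|\hat X_1)$.
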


\subsection{Codebook Design}\label{sec:DMS_ach_codebook}
Let $\epsilon_1=c_1\frac{\log n}{n}$, $\epsilon_2=c_2\frac{\log n}{n}$ and let $n$ be large enough. For each type $Q_X\in\calQ_{\calX}^n$, we construct a successive refinement code such that
\begin{itemize}
\item  every sequence $x^n\in\calT_{Q_X}^n$ is $D_1$-covered by a codebook $\calB_Y(Q_X)$ and $|\calB_Y(Q_X)|\leq2^{n(R(Q_X,D_1)+\epsilon_1)}$,
\item given $y^n\in\calB_Y(Q_X)$, $x^n$ is $D_2$-covered by a codebook $\calB_Z(y^n)$ and $\sum_{y^n\in\calB_Y(Q_X)}|\calB_Z(y^n)|\leq 2^{n(R(Q_X,R_1,D_1,D_2)+\epsilon_2)}$.
\end{itemize}
Such construction is guaranteed by Lemma \ref{theo:lemma_SR_DMS}. Given a source sequence $x^n$, we use $y^n$ to denote the codeword output by the first encoder and use $z^n$ to denote the codeword output by the second encoder.

We next divide the codebook $\calB_Y(Q_X)$ into $\lceil|\calB_Y(Q_X)/2^{nr_1}|\rceil$ bins, each of size $2^{nr_1}$, except for possibly the last one. Then, we use $\calB_Y(Q_X,i,\cdot)$ to denote the $i$th partition of the codebook $\calB_Y(Q_X)$ and $\calB_Y(Q_X,i,j)$ to denote the $j$-th codeword in the $i$-th partition. Hence, we can equivalently denote the codeword $y^n$ by $\calB_Y(Q_X,i,j)$. Similarly, for every $y^n\in\calB_Y(Q_X)$, we divide $\calB_Z(y^n)$ into $\lceil|\calB_Z(y^n)/2^{nr_2}|\rceil$ bins, each of size $2^{nr_2}$, except for possibly the last one. Similar to $\calB_Y(Q_X,i,\cdot)$ and $\calB_Y(Q_X,i,j)$, we define $\calB_Z(y^n,u,\cdot)$ and $\calB_Z(y^n,u,v)$. For each $x^n\in\calT_{Q_X}^n$, we use $i_{x^n}$ to denote the index of the partition containing the codeword  associated with $x^n$ in $\calB_Y(Q_X)$ and $j_{x^n}$ to denote the index of the codeword within the partition. Thus, the codeword is denoted as $\calB_Y(Q_X,i_{x^n},j_{x^n})$. Furthermore, given $y^n$, we use $u_{x^n,y^n}$ and $v_{x^n,y^n}$ to denote the corresponding indices of $\calB_Z(y^n)$.  For simplicity, we use $u_{x^n}$ and $v_{x^n}$ to denote $u_{x^n,y^n}$ and $v_{x^n,y^n}$, respectively. Such a notation is valid since $y^n$ is determined given $x^n$. Thus, for a source sequence $x^n$, given a codeword $y^n$, a codeword $z^n$ can be denoted by $\calB_Z(y^n,u_{x^n},v_{x^n})$. Finally, let $m_1(Q_X,i,j)$ be a message from encoder $f_1$ consisting following parts:
\begin{itemize}
\item $\lceil\log|\calQ_{\calX}^n|\rceil$ bits to describe the type $Q_X$.
\item $\lceil\log|\calB_Y(Q_X)/2^{nr_1}|\rceil$ bits to describe index $i$, where $i\in\big[\lceil|\calB_Y(Q_X)/2^{nr_1}|\rceil\big]$.
\item $\lceil\log|\calB_Y(Q_X,i,\cdot)|\rceil$ bits to describe the index $j$, where $j\in\big[\exp\lceil\log{|\calB_Y(Q_X,i,\cdot)|\rceil}\big]$.
\end{itemize}
Similarly, given $y^n$, let $m_2(u,v)$ be a message from encoder $f_2$ consisting following two parts:
\begin{itemize}
\item $\lceil\log|\calB_Z(y^n)/2^{nr_2}|\rceil$ bits to describe index $u$, where $u\in\big[\lceil|\calB_Z(y^n)/2^{nr_2}|\rceil\big]$.
\item $\lceil\log|\calB_Z(y^n,u,\cdot)|\rceil$ bits to describe the index $v$, where $v\in\big[\exp\lceil\log{|\calB_Z(y^n,i,\cdot)|\rceil}\big]$.
\end{itemize}

\subsection{Coding Scheme and Reliability Analysis}\label{sec:DMS_ach_codingscheme}
For any $\delta\in\mathbb{R}$, let
\begin{align}
\calQ(\alpha,\delta)&:=\{Q_X\in\calP(\calX):D(Q_X||P)\leq\alpha+\delta\},\\
\calQ_n(\alpha,\delta)&:=\{Q_X\in\calQ_\calX^n: D(Q_X||P)\leq\alpha+\delta\}.
\end{align}
Then let $\delta>0$ be such that $\max_{Q_X\in\calQ(\alpha,\delta)}R(Q_X,D_1)<R_1$.

Finally, for each sequence $x^n$, let $s_1(x^n)=\lceil\log|\calB_Y(Q_X,i_{x^n},\cdot)|\rceil$ and let $K_{s_1(x^n)}$ be the first $s_1(x^n)$ bits of $K_1^n$. Note that we can denote $s_1(x^n)$ by $s_1(Q_{x^n},i_{x^n})$ since $s_1(x^n)$ depends only on the type and the index of the bin. Furthermore, given $y^n\in\calB_Y(Q_X)$, let $s_2(x^n,y^n)=\lceil\log|\calB_Z(y^n,u,\cdot)|\rceil$ and let $K_{s_2(x^n,y^n)}$ be the first $s_2(x^n,y^n)$ bits of $K_2^n$. Then, we can denote $s_2(x^n,y^n)$ by $s_2(Q_{x^n},i_{x^n},j_{x^n},u_{x^n})$ since $s_2(x^n,y^n)$ depends only on codeword $y^n$ and the index of the bin of $\calB_Z(y^n)$, where $y^n$ is determined by $Q_{x^n},i_{x^n}$ and $j_{x^n}$. The encoders $f_1$ and $f_2$ operate as follows. Given $x^n$, if $Q_{x^n}\in\calQ_n(\alpha,\delta)$,
\begin{align}
f_1(x^n,K_1^n)&=m_1(Q_{x^n},i_{x^n},j_{x^n}\oplus K_{s_1(Q_{x^n},i_{x^n})}),\label{DMSach:def_f1}\\
f_2(x^n,y^n,K_2^n)&=m_2(u_{x^n},v_{x^n}\oplus K_{s_2(Q_{x^n},i_{x^n},j_{x^n},u_{x^n})})\label{DMSach:def_f2},
\end{align}
where the XOR-operation is performed bitwise.

The decoder $\phi_1$ and $\phi_2$ reconstruct the source sequence as follows:
\begin{align}
\phi_1(M_1,K_1^n)&=\calB_Y(Q_X,i_{x^n},j_{x^n}), \label{DMSach:def_phi1}\\
\phi_2(M_1,M_2,K_1^n,K_2^n)&=\calB_Z(y^n,u_{x^n},v_{x^n}).\label{DMSach:def_phi2}
\end{align}
In this case, decoder $\phi_1$ retrieves the type of source sequence and the index of the bin from the first two parts of the message $m_1$, then the index within the bin using the last part of $m_1$ and the key $K_1^n$. Then, decoder $\phi_2$ operate as follows: i) decodes the information of $y^n$ from $M_1$ and $K_1^n$, e.g., $Q_{x^n},i_{x^n}$ and $j_{x^n}$ and chooses the codebook $\calB_Z(y^n)$; ii) retrieves the index of the bin from the first part of the message $m_2$; iii) retrieves the index within the bin from the second part of $m_2$ and the key $K_2^n$. We summarize useful notations in Table~\ref{table:useful_notation}.

\begin{table}
\centering
\caption{Useful Notations}
\begin{tabular}{|c|c|}
\hline
Notation & Description\\
\hline
$\calB_Y(Q_X)$ & Codebook for encoder $f_1$ given type $Q_X$   \\
\hline
$\calB_Z(y^n)$
& Codebook for encoder $f_2$ given a codeword $y^n$ of encoder $f_1$ \\
\hline
$i_{x^n}$ & Bin index given $x^n$ used by encoder $f_1$\\
\hline
$j_{x^n}$ & Index within a bin given $x^n$ used by encoder $f_1$\\
\hline
$u_{x^n,y^n}$ & Bin index given $x^n$ and $y^n$ used by encoder $f_2$, denoted by $u_{x^n}$ for simplicity\\
\hline
$v_{x^n,y^n}$ & Index within a bin given $x^n$ and $y^n$ used by encoder $f_2$, denoted by $v_{x^n}$ for simplicity\\
\hline
$s_1(x^n)$ & $s_1(x^n)=\lceil\log|\calB_Y(Q_X,i_{x^n},\cdot)|\rceil=s_1(Q_{x^n},i_{x^n})$\\
\hline
$s_2(x^n,y^n)$  & $s_2(x^n,y^n)=\lceil\log|\calB_Z(y^n,u,\cdot)|\rceil$  $=s_2(Q_{x^n},i_{x^n},j_{x^n},u_{x^n})$\\
\hline
$K_{s_1(x^n)}$ & first $s_1(x^n)$ bits of $K_1^n$\\
\hline
$K_{s_2(x^n,y^n)}$ & first $s_2(x^n,y^n)$ bits of $K_2^n$\\
\hline
\end{tabular}\label{table:useful_notation}
\end{table}

Now, consider an $m_0\in\calM_1$ that has not been used yet. Note that the requirement of $R_1$ and $R_2$ in \eqref{def:DMS_require_R1} and \eqref{def:DMS_require_R1R2} and the choice of $\delta$ ensures the existence of such $m_0$. For all $x^n$ such that $Q_{x^n}\notin\calQ_n(\alpha,\delta)$,
\begin{align}
f_1(x^n,K_1^n)=f_2(x^n,y^n,K_2^n)=m_0. \label{DMSach:transmit_m0}
\end{align}

To prove that our coding scheme satisfies the JEP constraint, we find that an error occurs if the type of the source sequence is deviated too much from the source distribution, i.e.,
\begin{align}
\rmP_e^n(D_1,D_2)&\leq\sum\limits_{Q_X\notin\calQ_n(\alpha,\delta)}P^n(\calT_{Q_X}^n) \label{DMSach:JEP_scheme}\\
&\leq\sum\limits_{Q_X\notin\calQ_n(\alpha,\delta)}2^{-nD(Q_X\parallel P)} \label{DMSach:JEP_using_type_prob}\\
&\leq(n+1)^{|\calX|}2^{-n(\alpha+\delta)} \label{DMSach:JEP_using_type_number}\\
&\leq2^{-n\alpha} \label{DMSach:JEP_large_n},
\end{align}
where Eq. \eqref{DMSach:JEP_scheme} follows from the design of our coding scheme, Eq. \eqref{DMSach:JEP_using_type_prob} follows from the upper bound for the probability of a type class~\cite[Lemma 2.6]{csiszar2011information}, Eq. \eqref{DMSach:JEP_using_type_number} follows by type counting lemma~\cite[Lemma 2.2]{csiszar2011information} that upper bounds the number of types, and Eq. \eqref{DMSach:JEP_large_n} follows for large enough $n$.

\subsection{Maximal Leakage Analysis}\label{sec:DMS_ach_leakage}
To analyze the maximal leakage of the first layer of encoder and decoder, note that we are leaking the first two parts of message $M_1$, that is, $Q_{X^n}$ and $i_{X^n}$, and hiding the last part $j_{X^n}$. The first part doesn't affect the normalized leakage since there are only polynomial many types. The second part consists roughly of $R(Q,D_1)-r_1$ bits for $R(Q,D_1)>r_1$ and otherwise, for $R(Q,D_1)\leq r_1$, there is no information to be leaked since there is only one bin. The analysis of both two layers of encoders and decoders is similar to the proof of~\cite[Theorem 8]{issa2020leakage}. The eavesdropper receives both message $M_1$ and $M_2$ and retrieves the information of $Q_{X^n},i_{X^n}$ and $u_{X^n}$, whose sum rate is roughly $R(Q,R_1,D_1,D_2)-r_1-r_2$ for $R(Q,R_1,D_1,D_2)>r_1+r_2$. If $R(Q,R_1,D_1,D_2)\leq r_1+r_2$, each message $M_1$ and $M_2$ consists only one bin and there is no information to be leaked.

Let the joint probability distribution of $(x^n,m_1,m_2)$ be $P_{f_1f_2}(x^n,m_1,m_2)$
induced by the source distribution and the stochastic mapping of encoders. Hence, we use $P_{f_1}(m_1|x^n)$ and $P_{f_2}(m_2|x^n,m_1)$ to denote the conditional probability distributions induced by $P_{f_1f_2}(x^n,m_1,m_2)$. Then, given $x^n$ satisfying $Q_{x^n}\in\calQ_n(\alpha,\delta)$, invoking Eq. \eqref{DMSach:def_f1} and Eq. \eqref{DMSach:def_f2}, noting that the key $K_1^n$ and $K_2^n$ are uniformly distributed, it follows that
\begin{align}
P_{f_1}(m_1(Q_{x^n},i_{x^n},j)|x^n)&=2^{-s_1(Q_{x^n},i_{x^n})}, \label{DMSach:uniform_f1}\\
P_{f_2}(m_2(u_{x^n},v)|x^n,Q_{x^n},i_{x^n},j_{x^n})&= 2^{-s_2(Q_{x^n},i_{x^n},j_{x^n},u_{x^n})}. \label{DMSach:uniform_f2}
\end{align}

Similar to~\cite[Section IV. D, Eq.(42)]{issa2020leakage}, except that: i) the distortion level is replaced from $D$ to $D_1$ and the key rate is changed from $r$ to $r_1$, we have that
\begin{align}
\frac{1}{n}\rmL(X^n\to M_1)\leq\max_{Q:D(Q||P)\leq\alpha}\big\{R(Q,D_1)-r_1\big\}^+.
\end{align}

Then we mainly focus on $\rmL(X^n\to M_1M_2)$. Define the following sets
\begin{align}
\calD_0&:=\{x^n\in\calT_{Q_X}^n:Q_X\notin\calQ_n(\alpha,\delta)\},\\
\calD&:=\{x^n\in\calT_{Q_X}^n:Q_X\in\calQ_n(\alpha,\delta)\}.
\end{align}
Recall the definition of maximal leakage in Definition \ref{def:maximal_leakage_DMS}, it follows that
\begin{align}
\nn&\exp\big\{\rmL(X^n\to M_1M_2)\big\}\\*
&\quad=\sum_{(m_1,m_2)\in\calM_1\times\calM_2} \max_{x^n\in\calX^n}P_{f_1,f_2}(m_1,m_2|x^n)\label{DMSach:use_maxLdef}\\
&\quad=\max_{x^n\in\calD_0} P_{f_1,f_2}(m_0,m_0|x^n)+\sum_{\substack{(m_1,m_2)\in\calM_1\times\calM_2,\\ m_1\neq m_0,m_2\neq m_0}} \max_{x^n\in\calD}P_{f_1,f_2}(m_1,m_2|x^n) \label{DMSach:split_messages}\\
\nn&\quad=1+\sum_{Q_X\in\calQ_n(\alpha,\delta)} \sum_{i=1}^{\lceil|\calB_Y(Q_X)/2^{nr_1}|\rceil} \sum_{j=1}^{2^{s_1(Q_X,i)}}\max_{x^n\in\calD}P_{f_1}(m_1(Q_X,i,j)|x^n) \sum_{u=1}^{\lceil\calB_Z(Q_X,i,j)/2^{nr_2}\rceil}\\*
&\qquad\qquad\times\sum_{v=1}^{2^{s_2(Q_X,i,j,u)}}\max_{x^n\in\calD}P_{f_2}(m_2(u,v)|x^n,Q_X,i,j) \label{DMSach:use_codingscheme}\\
&\quad=1+\sum_{Q_X\in\calQ_n(\alpha,\delta)}\sum_{i=1}^{\lceil|\calB_Y(Q_X)/2^{nr_1}|\rceil} \sum_{j=1}^{2^{s_1(Q_X,i)}}2^{-s_1(Q_X,i)} \sum_{u=1}^{\lceil\calB_Z(Q_X,i,j)/2^{nr_2}\rceil}\sum_{v=1}^{2^{s_2(Q_X,i,j,u)}} 2^{-s_2(Q_X,i,j,u)} \label{DMSach:use_uniformkey}\\
&\quad=1+\sum_{Q_X\in\calQ_n(\alpha,\delta)}\sum_{i=1}^{\lceil|\calB_Y(Q_X)/2^{nr_1}|\rceil} \sum_{u=1}^{\lceil\calB_Z(Q_X,i,j)/2^{nr_2}\rceil}\label{DMSach:sum_inner}\\
&\quad\leq1+\sum_{Q_X\in\calQ_n(\alpha,\delta)}(2^{n\{R(Q_X,D_1)+\varepsilon_1-r_1\}^+}+1) (2^{n\{R(Q_X,R_1,D_1,D_2)-R(Q_X,D_1)+\varepsilon_2-r_2\}^+}+1)\label{DMSach:using_lemmaSR}\\
&\quad\leq1+4\sum_{\substack{Q_X\in\\\calQ_n(\alpha,\delta)}}  \exp\Big\{ n\{\{R(Q_X,R_1,D_1,D_2)-R(Q_X,D_1)+\varepsilon_2-r_2\}^+ +\{R(Q_X,D_1)+\varepsilon_1-r_1\}^+\}\Big\} \label{DMSach:upper_2x2y}\\
&\quad\leq1+4\sum_{\substack{Q_X\in\\\calQ_n(\alpha,\delta)}} \exp\Big\{n\max_{Q_X\in\calQ_n(\alpha,\delta)}\big\{ \{R(Q_X,R_1,D_1,D_2)-R(Q_X,D_1)+\varepsilon_2-r_2\}^+ +\{R(Q_X,D_1)+\varepsilon_1-r_1\}^+\big\}\Big\}\label{DMSach:add_max}\\
&\quad\leq1+4\sum_{Q_X\in\calQ_{\calX}^n} \exp\Big\{n\max_{Q_X\in\calQ_n(\alpha,\delta)}\big\{ \{R(Q_X,R_1,D_1,D_2)-R(Q_X,D_1)+\varepsilon_2-r_2\}^+ +\{R(Q_X,D_1)+\varepsilon_1-r_1\}^+\big\}\Big\}\\
&\quad\leq 8(n+1)^{|\calX|}\exp\Big\{ n \max_{Q_X\in\calQ_n(\alpha,\delta)}\big\{\{R(Q_X,R_1,D_1,D_2) -R(Q_X,D_1)+\varepsilon_2-r_2\}^++\{R(Q_X,D_1)+\varepsilon_1-r_1\}^+\big\}\Big\}, \label{DMSach:L_usingtype}
\end{align}
where Eq. \eqref{DMSach:split_messages} follows from Eq. \eqref{DMSach:transmit_m0}, Eq. \eqref{DMSach:use_codingscheme} follows from our coding scheme, Eq. \eqref{DMSach:use_uniformkey} follows from Eq. \eqref{DMSach:uniform_f1} and \eqref{DMSach:uniform_f2}, Eq. \eqref{DMSach:using_lemmaSR} follows from Lemma \ref{theo:lemma_SR_DMS}, with $\calB_Z(y^n)=\calB_Z(Q_X,i,j)$ and the fact that $\lceil|\calB_Y(Q_X)/2^{nr_1}|\rceil$ and $\lceil\calB_Z(Q_X,i,j)/2^{nr_2}\rceil$ are lower bounded by 1, Eq. \eqref{DMSach:upper_2x2y} follows since $x+1\leq 2x$ for $x\geq1$, Eq. \eqref{DMSach:L_usingtype} follows from the type counting lemma~\cite[Lemma 2.2]{csiszar2011information}.

Taking $n\to\infty$, noting that $\varepsilon_1$, $\varepsilon_2$ and $\delta$ are arbitrary and invoking the continuity of $R(Q,R_1,D_1,D_2)$, e.g., \\ $\lim_{\delta\to0}\max_{Q\in\calQ(\alpha,\delta)}R(Q,R_1,D_1,D_2) =\max_{Q\in\calQ(\alpha,0)}R(Q,R_1,D_1,D_2)$ (follows from the convexity of $D(P||Q)$), we have that
\begin{align}
\frac{1}{n}\rmL(X^n\to M_1M_2)\leq\max_{Q:D(Q||P)\leq\alpha}\Big\{ \{R(Q,D_1)-r_1\}^+ +\{R(Q,R_1,D_1,D_2)-R(Q,D_1)-r_2\}^+\Big\},
\end{align}
which completes the proof.

\section{Converse Proof of Theorem \ref{theo:DMS}} \label{sec:Conv_TheoremDMS}
To prove the converse part, we need to derive a lower bound of the normalized maximal leakage between the source sequence $X^n$ and the messages $M_1$ and $M_2$. Firstly, inspired by~\cite[Section IV-E]{issa2020leakage}, we propose a guessing scheme of Eve. We next generalize~\cite[Lemma 5]{issa2017guess} to the successive refinement setting in Lemma \ref{theo:lemma_DMS_converse}, which characterizes the probability of correctly guessing the source sequence by Eve. Finally, we derive lower bounds of the normalized maximal leakage to Eve, where the JEP constraint is satisfied by the conditions of Lemma \ref{theo:lemma_DMS_converse}. We find our result is tight under mild conditions since the lower bounds of normalized maximum leakage in proposed guessing scheme coincides with that of upper bounds in our achievability analysis, i.e., our proposed encoding scheme prevents Eve from acquiring more information than the achievability bounds, even if Eve can potentially acquire more information through a better guessing scheme.

\subsection{Guessing Scheme for Eve}\label{sec:Conv_DMS_guessing_scheme}
Consider the following process. The eavesdropper Eve is interested in a randomized function of $X$ called $U$. We assume that $U$ is discrete but unknown to the system designer, which models the fact that we don't know Eve's function of interest. To measure information leakage, we use the following equivalent definition of maximal leakage~\cite[Definition 1]{issa2020leakage}.
\begin{definition}\label{def:maximal_leakage_orginal}
Given a joint distribution $P_{XY}$ on alphabets $\calX$ and $\calY$, the maximal leakage from $X$ to $Y$ is defined as
\begin{align}
\rmL(X\to Y)=\sup_{U-X-Y-\hatU}\log\frac{\Pr\left\{U=\hatU\right\}}{\max_{u\in\calU}P_U(u)}, \label{def:maximal_leakage_original_eq}
\end{align}
where the Markov chain $U-X-Y-\hatU$ holds and the supremum is over all $U$ and $\hatU$ taking values in the same finite, but arbitrary, alphabet.
\end{definition}

Eve observes outputs $M_1$ and $M_2$ of both encoders, and tries to guess $U$ that achieves the supremum in Eq. \eqref{def:maximal_leakage_original_eq} and can verify his/her guess. Consider the guessing scheme of Eve. The adversary Eve first tries to guess the keys $K_i^n$ randomly and uniformly from $\{0,1\}^{nr_i}$ and we denote the guess by $\tilde{K}_i^n$. Then, by assuming that the guess of key was correct, Eve attempts to guess the sequence $x^n$ by using a guessing function $g_i$ given by Lemma \ref{theo:lemma_DMS_converse} below. Finally, again by assuming that the guess of $x^n$ was correct, eve tries to guess $U$ by MAP rule. We denote this stage by $g_U$. Similar to~\cite[Eq. (43)]{issa2020leakage}, we have that
\begin{align}
\Pr\{g_U(x^n)=U|x^n\}=\frac{p^*}{P(x^n)}, \label{DMSconv:result_gU}
\end{align}
where $p^*=\max_{u\in\calU}P_U(u)$.

\subsection{Probability of Correctly Guessing by Eve}\label{sec:Conv_DMS_proof_lemma_converse_DMS}
\begin{lemma}\label{theo:lemma_DMS_converse}
The random function $g_1$ and $g_2$ satisfy that:
\begin{itemize}
\item[i)] There exists a function $g_1:\hat{\calX}_1^n\to\calX^n$ such that for all $(x^n,\hatx_1^n)$ satisfying $d_1(x^n,\hatx_1^n)\leq D_1$,
    \begin{align}
    \Pr\{x^n=g_1(\hatx_1^n)\}\geq b_1^n2^{-n(H_{Q_{x^n}}(X)-R(Q_{x^n},D_1))},
    \end{align}
    where $b_1^n=(n+1)^{-|\calX||\hat{\calX_1}|(|\calX|+1)}$.
\item[ii)] There exist a function $g_2:\hat{\calX}_1^n\times\hat{\calX}_2^n\to\calX^n$ such that for all $(x^n,\hatx_1^n,\hatx_2^n)$ satisfying $d_1(x^n,\hatx_1^n)\leq D_1$, $d_2(x^n,\hatx_2^n)\leq D_2$, $R_1\geq R(Q_{x^n},D_1)$,
    \begin{align}
    \Pr\{x^n=g_2(\hatx_1^n,\hatx_2^n)\} \geq b_2^n2^{-n(H_{Q_{x^n}}(X)-R(Q_{x^n},R_1,D_1,D_2))},
    \end{align}
    where $b_2^n=(n+1)^{-|\calX||\hat{\calX_1}||\hat{\calX_2}|}$.
\end{itemize}
\end{lemma}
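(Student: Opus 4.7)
The plan is to prove both parts by constructing randomized guessing functions that sample from appropriate conditional type classes and then bound their success probability using the variational characterization of the rate--distortion functionals in \eqref{def:R(QD1)} and \eqref{def:R(QR1D1D2)}.

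For part (i), I would define $g_1$ so that, for each $\hat{x}_1^n\in\hat{\calX}_1^n$, one independently and uniformly draws a conditional type $V_{X|\hat{X}_1}$ from the at most $(n+1)^{|\calX||\hat{\calX}_1|}$ such types and then samples $g_1(\hat{x}_1^n)$ uniformly from the conditional type class $\calT_V(\hat{x}_1^n)$ whenever it is non-empty. Fixing any pair $(x^n,\hat{x}_1^n)$ with $d_1(x^n,\hat{x}_1^n)\leq D_1$ and letting $V^\star$ and $Q$ denote the induced conditional and joint types, this construction gives $\Pr\{g_1(\hat{x}_1^n)=x^n\}\geq (n+1)^{-|\calX||\hat{\calX}_1|}/|\calT_{V^\star}(\hat{x}_1^n)|$. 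Invoking the standard upper bound $|\calT_{V^\star}(\hat{x}_1^n)|\leq 2^{nH_Q(X|\hat{X}_1)}$ together with the observation that the induced test channel $Q_{\hat{X}_1|X}$ is feasible for the minimization defining $R(Q_{x^n},D_1)$ (since its $X$-marginal equals $Q_{x^n}$ and $\mathbf{E}_Q[d_1(X,\hat{X}_1)]\leq D_1$), I would deduce $I_Q(X;\hat{X}_1)\geq R(Q_{x^n},D_1)$ and hence $H_Q(X|\hat{X}_1)\leq H_{Q_{x^n}}(X)-R(Q_{x^n},D_1)$; this yields the desired bound, with the residual type-counting losses absorbed into $b_1^n$.

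For part (ii), I would apply the analogous construction on the product alphabet $\hat{\calX}_1^n\times\hat{\calX}_2^n$: randomize uniformly over conditional types $V_{X|\hat{X}_1\hat{X}_2}$ (at most $(n+1)^{|\calX||\hat{\calX}_1||\hat{\calX}_2|}$ of them) and sample $g_2(\hat{x}_1^n,\hat{x}_2^n)$ uniformly from $\calT_V(\hat{x}_1^n,\hat{x}_2^n)$. The hypothesis $R_1\geq R(Q_{x^n},D_1)$ is invoked to guarantee that the constraint set in the definition of $R(Q_{x^n},R_1,D_1,D_2)$ is non-empty, and the argument reduces to establishing $I_Q(X;\hat{X}_1,\hat{X}_2)\geq R(Q_{x^n},R_1,D_1,D_2)$ for the induced joint type $Q$, which, combined with the size bound $|\calT_V(\hat{x}_1^n,\hat{x}_2^n)|\leq 2^{nH_Q(X|\hat{X}_1\hat{X}_2)}$, delivers the desired lower bound on $\Pr\{g_2(\hat{x}_1^n,\hat{x}_2^n)=x^n\}$.

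The main obstacle lies in part (ii): the minimization defining $R(Q_{x^n},R_1,D_1,D_2)$ carries the additional constraint $I(X;\hat{X}_1)\leq R_1$, and the joint type $Q$ induced by an arbitrary triple is not a priori guaranteed to satisfy this inequality, so the induced test channel need not be feasible for that minimization. To close this gap, I would either specialize the bound to triples arising in Eve's guessing scheme from Section~\ref{sec:Conv_DMS_guessing_scheme}, where $\hat{x}_1^n$ is the output of a type-based covering with rate close to $R(Q_{x^n},D_1)\leq R_1$ so that $I_Q(X;\hat{X}_1)\leq R_1+o(1)$ holds by construction, or modify $g_2$ to first project $\hat{x}_1^n$ onto a nearby representative of such a covering before performing the conditional-type-class sampling step. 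The residual slack from this quantization, together with the various type-counting factors, is then absorbed into the polynomial prefactor $b_2^n$.
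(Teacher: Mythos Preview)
Your construction is essentially the paper's: for part~(ii) the paper also has Eve pick a joint type uniformly from the set of $n$-types on $\calX\times\hat\calX_1\times\hat\calX_2$ consistent with $(\hat x_1^n,\hat x_2^n)$ and the distortion constraints, then sample $x^n$ uniformly from the resulting conditional type class, yielding $\Pr\{x^n=g_2(\hat x_1^n,\hat x_2^n)\}\geq (n+1)^{-|\calX||\hat\calX_1||\hat\calX_2|}2^{-nH_Q(X|\hat X_1,\hat X_2)}$, and then invokes the definition of $R(Q_{x^n},R_1,D_1,D_2)$ to conclude $H_Q(X|\hat X_1,\hat X_2)\leq H_{Q_{x^n}}(X)-R(Q_{x^n},R_1,D_1,D_2)$.

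The feasibility issue you flag---that the induced joint type need not satisfy $I_Q(X;\hat X_1)\leq R_1$---is genuine, and you are in fact more careful here than the paper. The paper restricts its type set only by the condition $R_1>R(P_X,D_1)$ (which the hypothesis guarantees for the true type) and then simply asserts in Eq.~\eqref{DMSconv:use_def_R(QR1D1D2)} that $I_Q(X;\hat X_1,\hat X_2)\geq R(Q_X,R_1,D_1,D_2)$ ``from the definition,'' without verifying the constraint $I_Q(X;\hat X_1)\leq R_1$. Your proposed workarounds (specializing to codebook outputs, or pre-quantizing $\hat x_1^n$ onto a rate-$R_1$ covering) are not in the paper; the paper's argument, as written, leaves this point unaddressed. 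So your proposal matches the paper's approach while being more scrupulous about a step the paper elides.
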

\begin{IEEEproof}
The proof of Claim i) is similar to the proof of~\cite[Lemma 12]{issa2020leakage} except that the distortion level is replaced from $D$ to $D_1$. Thus, we mainly focus on the proof of Claim ii), which generalizes the proof of Lemma 5 in~\cite{issa2017guess} to successive refinement setting.

We propose a two-stage scheme for Eve. In the first stage, Eve tries to guess a joint type of $x^n$, $\hatx_1^n$ and $\hatx_2^n$ by observing $\hatx_1^n$ and $\hatx_2^n$. Specifically, Eve chooses an element uniformly at random from the set $\calQ_{\calX\hat\calX_1\hat\calX_2}^n(Q_{\hatx_1^n,\hatx_2^n},D_1,D_2,R_1)$, where
\begin{align}
\nn&\calQ_{\calX\hat\calX_1\hat\calX_2}^n(Q_{\hatx_1^n,\hatx_2^n},D_1,D_2,R_1):=\big\{P_{X\hatX_1\hatX_2} \in\calQ_{\calX\hat\calX_1\hat\calX_2}^n:\\*
&\qquad\quad P_{\hatX_1\hatX_2}=Q_{\hatx_1^n,\hatx_2^n}, \mathbf{E}_{P_{X\hatX_1}}[d_1(X,\hatX_1)\leq D_1], \mathbf{E}_{P_{X\hatX_2}}[d_2(X,\hatX_2)\leq D_2],R_1> R(P_X,D_1)\big\},
\end{align}
where $\calQ_{\calX\hat\calX_1\hat\calX_2}^n$ is the set of types in $\calX\times\hat\calX_1\times\hat\calX_2$. We denote the corresponding function of the this stage by $g_2':\hat\calX_1^n\times\hat\calX_2^n\to \calQ_{\calX\hat\calX_1\hat\calX_2}^n$.

Proceeding by assuming $g'_2(\hatx_1^n,\hatx_2^n)$ is correct joint type, Eve then chooses a sequence uniformly at random from the type class of $Q_{X\hatX_1\hatX_2}^n$. We denote  corresponding function of the this stage by $g_2'':\hat\calX_1^n\times\hat\calX_2^n\times\calQ_{\calX\hat\calX_1\hat\calX_2}^n\to\calX^n$.

Noting that $g_2(\hatx_1^n,\hatx_2^n)=g_2''(\hatx_1^n,\hatx_2^n,g_2'(\hatx_1^n,\hatx_2^n))$, we have that
\begin{align}
\Pr\{x^n=g_2(\hatx_1^n,\hatx_2^n)\}&=\sum_{Q'_{X\hatX_1\hatX_2}\in \calQ_{\calX\hat\calX_1\hat\calX_2}^n(Q_{\hatx_1^n,\hatx_2^n},D_1,D_2,R_1)} \Pr\{g'_2(\hatx_1^n,\hatx_2^n)=Q_{x^n\hatx_1^n\hatx_2^n}\} \Pr\{x^n=g_2''(\hatx_1^n\hatx_2^n,Q_{x^n\hatx_1^n\hatx_2^n})\}\\
&\geq(n+1)^{-|\calX||\hat{\calX_1}||\hat{\calX_2}|} \Pr\{x^n=g_2''(\hatx_1^n\hatx_2^n,Q_{x^n\hatx_1^n\hatx_2^n})\}\label{DMSconv:use_g2'}\\
&\geq(n+1)^{-|\calX||\hat{\calX_1}||\hat{\calX_2}|}2^{-nH(X|\hatX_1,\hatX_2)}\label{DMSconv:use_g2''},
\end{align}
where Eq. \eqref{DMSconv:use_g2'} and Eq. \eqref{DMSconv:use_g2''} follows from the method of types.
Note that
\begin{align}
H(X|\hatX_1,\hatX_2)&=H(X)-H(X)+H(X|\hatX_1,\hatX_2)\\
&=H(X)-I(X;\hatX_1,\hatX_2)\\
&\leq H(X)-R(Q_X,R_1,D_1,D_2), \label{DMSconv:use_def_R(QR1D1D2)}
\end{align}
where Eq. \eqref{DMSconv:use_def_R(QR1D1D2)} follows from the definition of $R(Q_X,R_1,D_1,D_2)$ in Eq. \eqref{def:R(QR1D1D2)}.

The proof is done by combining the results of Eq. \eqref{DMSconv:use_g2''} and Eq. \eqref{DMSconv:use_def_R(QR1D1D2)}.
\end{IEEEproof}

\subsection{Lower Bound Maximal Leakage}\label{sec:Conv_DMS_lowerbounds_maximalleakage}
Let $P_f$ denote the joint distribution of $(X^n,K_1^n,K_2^n,M_1,M_2)$ induced by the source distribution, the keys' distributions and the distributions of messages. Thus, $P_f(M_1,M_2|X^n,K_1,K_2)$ is the induced conditional distribution.

Note that the decoding function $\phi_1$ is a deterministic function of $M_1$ and $K_1^n$, and $\phi_2$ is a deterministic function of $M_1$, $M_2$, $K_1^n$ and $K_2^n$. Let
\begin{align}
\nn&\calM_{D_1D_2}(x^n,k_1,k_2):=\big\{(m_1,m_2)\in\calM_1\times\calM_2: d_1(x^n,\phi_1(m_1,k_1))\leq D_1,\;d_2(x^n,\phi_2(m_1,m_2,k_1,k_2))\leq D_2\big\},\\*
&\qquad\qquad\qquad\qquad\qquad x^n\in\calX^n,k_1\in\calK_1^n,k_2\in\calK_2^n,
\end{align}
and
\begin{align}
\nn\calA:=&\big\{x^n\in\calX^n:d_1(x^n,\phi_1(m_1,k_1))> D_1\;\mathrm{or}\; d_2(x^n,\phi_2(m_1,m_2,k_1,k_2))> D_2\big\},\\ &\mathrm{for\;some}\;\;\;(m_1,m_2)\in\calM_1\times\calM_2,k_1\in K_1^n,k_2\in K_2^n. \label{def:A}
\end{align}

Similar to~\cite[Section IV. E, Eq.(46)-(47)]{issa2020leakage}, except that: i) the distortion level is replaced from $D$ to $D_1$ and the key rate is changed from $r$ to $r_1$, we have that
\begin{align}
\frac{1}{n}\rmL(X^n\to M_1)\geq\max_{Q:D(Q||P)\leq\alpha}\big\{R(Q_X,D_1)-r_1\big\}^+.
\end{align}

We next analyze $\rmL(X^n\to M_1M_2)$. To that end, letting $g$ be the concatenation of all stages, i.e.,\\ $g(M_1,M_2):=g_U(g_2(\phi_2(M_1,M_2,\tilde{K}_1^n,\tilde{K}_2^n)))$, we have that
\begin{align}
\nn&\Pr\big\{U=g(M_1,M_2)\big\}\\*
\nn&\quad=\sum_{x^n\in\calX^n}\sum_{u\in\calU}\sum_{k_1\in\calK_1^n}\sum_{k_2\in\calK_2^n} \sum_{(m_1,m_2)\in\calM_1\times\calM_2}P(x^n) P_{U|X^n}(u|x^n)P_{K_1^n}(k_1)P_{K_2^n}(k_2)P_f(m_1,m_2|x^n,k_1,k_2)\\*
&\quad\quad\cdot\Pr\big\{u=g(m_1,m_2)|x^n,m_1,m_2,k_1,k_2\big\} \label{DMSconv:use_def_g(M1M2)} \\
\nn&\quad\geq\sum_{x^n\in\calX^n}\sum_{u\in\calU}\sum_{k_1\in\calK_1^n}\sum_{k_2\in\calK_2^n}\; \sum_{(m_1,m_2)\in\calM_{D_1D_2}(x^n,k_1,k_2)}P(x^n) P_{U|X^n}(u|x^n)P_{K_1^n}(k_1)P_{K_2^n}(k_2)P_f(m_1,m_2|x^n,k_1,k_2)\\*
&\quad\quad\cdot\Pr\big\{u=g(m_1,m_2)|x^n,m_1,m_2,k_1,k_2\big\} \cdot1\big\{R_1\geq R(Q_X,D_1)\big\}\\
\nn&\quad\geq\sum_{x^n\in\calX^n}\sum_{u\in\calU}\sum_{k_1\in\calK_1^n} \sum_{k_2\in\calK_2^n}\;\sum_{(m_1,m_2)\in\calM_{D_1D_2}(x^n,k_1,k_2)}P(x^n)  P_{U|X^n}(u|x^n)P_{K_1^n}(k_1)P_{K_2^n}(k_2)P_f(m_1,m_2|x^n,k_1,k_2)\\*
&\quad\quad\cdot\Pr\{\tilde{K}_1^n=k_1\}\Pr\{\tilde{K}_2^n=k_2\}\Pr\{g_U(x^n)=u|x^n\} \Pr\{x^n=g_2(\phi_2(m_1,m_2,k_1,k_2))\}1\big\{R_1\geq R(Q_X,D_1)\big\} \label{DMSconv:use_g2gu} \\
\nn&\quad\geq b_2^n\sum_{x^n\in\calX^n}\sum_{k_1\in\calK_1^n}\sum_{k_2\in\calK_2^n}\; \sum_{(m_1,m_2)\in\calM_{D_1D_2}(x^n,k_1,k_2)}P(x^n)  P_{K_1^n}(k_1)P_{K_2^n}(k_2)P_f(m_1,m_2|x^n,k_1,k_2)\cdot2^{-nr_1}2^{-nr_2}\\*
&\quad\quad\cdot 2^{-n(H_{Q_{x^n}}(X)-R(Q_{x^n},R_1,D_1,D_2))}\cdot p^*/P(x^n)\label{DMSconv:use_lemma_g2} \\
\nn&\quad= b_2^n p^*2^{-nr_1}2^{-nr_2}\sum_{Q_X\in\calQ_{\calX}^n}\sum_{x^n\in T_{Q_X}}\sum_{k_1\in\calK_1^n}\sum_{k_2\in\calK_2^n} \sum_{(m_1,m_2)\in\calM_{D_1D_2}(x^n,k_1,k_2)}P(x^n)P_{K_1^n}(k_1)P_{K_2^n}(k_2)\\*
&\quad\quad\cdot P_f(m_1,m_2|x^n,k_1,k_2)\cdot2^{n(R(Q_X,R_1,D_1,D_2)+D(Q_X||P_X))} \label{DMSconv:having_D(QP)_types} \\
&\quad=b_2^n p^*2^{-nr_1}2^{-nr_2}\sum_{Q_X\in\calQ_{\calX}^n}2^{n(R(Q_X,R_1,D_1,D_2)+D(Q_X||P_X))} P_f(\calA^c\cap \calT_{Q_X}^n), \label{DMSconv:result_simplify_toPf}
\end{align}
where Eq. \eqref{DMSconv:use_def_g(M1M2)} follows from the definition of $g(M_1,M_2)$, Eq. \eqref{DMSconv:use_g2gu} follows from that we refine the guessing scheme $g$ by $g_2$ and $g_U$,  Eq. \eqref{DMSconv:use_lemma_g2} follows from Lemma \ref{theo:lemma_DMS_converse}, Eq. \eqref{DMSconv:result_gU} and that Eve guesses the keys $K_i^n$ randomly and uniformly from $\{0,1\}^{nr_i}$ for $i\in[2]$ and Eq. \eqref{DMSconv:having_D(QP)_types} follows from the method of types. For any $Q_X$, it follows that
\begin{align}
P_f(\calA^c|\calT_{Q_X}^n)
&=1-P_f(\calA|\calT_{Q_X}^n)\\
&\geq 1-\min\left\{1,\frac{P_f(\calA)}{P(\calT_{Q_X}^n)}\right\} \label{DMSconv:explain_P(TQ)} \\
&\geq 1-\min\left\{1,2^{-n(\alpha-D(Q_X||P_X)-\frac{|\calX|}{n}\log(n+1))}\right\}\label{DMSconv:use_JEP}\\
&=\left\{1-2^{-n(\alpha-D(Q_X||P_X)-\frac{|\calX|}{n}\log(n+1))}\right\}^+,\label{DMSconv:result_PfAcTQ}
\end{align}
where the $P(\calT_{Q_X}^n)$ in Eq. \eqref{DMSconv:explain_P(TQ)} denotes the probability of the type class $\calT_{Q_X}^n$ under distribution $P$, Eq. \eqref{DMSconv:use_JEP} follows from the lower bound of the probability of type class~\cite[Lemma 2.6]{csiszar2011information}, $P_f(\calA)=\rmP_{\rme}^n(D_1,D_2)$ (cf. Eq. \eqref{def:JEP} and Eq. \eqref{def:A}) and the fact that $\rmP_{\rme}^n(D_1,D_2)\leq2^{-n\alpha}$.

For simplicity, let $b_3^n=\frac{|\calX|}{n}\log(n+1)$. Combining the results of Eq. \eqref{DMSconv:result_simplify_toPf} and Eq. \eqref{DMSconv:result_PfAcTQ}, fixing $\tau>0$, we have that
\begin{align}
\nn&\Pr\big\{U=g(M_1,M_2)\big\}\\*
&\quad\geq b_2^n p^*2^{-nr_1}2^{-nr_2} \sum_{Q_X\in\calQ_{\calX}^n}2^{n(R(Q_X,R_1,D_1,D_2)+D(Q_X||P_X))} P(\calT_{Q_X}^n)\cdot\left\{1-2^{-n(\alpha-D(Q_X||P_X)-b_3^n)}\right\}^+\\
&\quad\geq b_4^n p^*2^{-nr_1}2^{-nr_2}\sum_{Q_X\in\calQ_{n}(\alpha,-\tau)}2^{nR(Q_X,R_1,D_1,D_2)} (1-2^{-n(\alpha-D(Q_X||P_X)-b_3^n)})\label{DMSconv:use_b4n}\\
&\quad\geq b_4^n p^*2^{-nr_1}2^{-nr_2}\sum_{Q_X\in\calQ_{n}(\alpha,-\tau)}2^{nR(Q_X,R_1,D_1,D_2)}\cdot\frac{1}{2} \label{DMSconv:use_1/2}\\
&\quad\geq\frac{b_4^n p^*}{2}\max_{Q_X\in\calQ_{n}(\alpha,-\tau)}2^{n(R(Q_X,R_1,D_1,D_2)-r_1-r_2)},\label{DMSconv:result_U_gM1M2}
\end{align}
where Eq. \eqref{DMSconv:use_b4n} follows from the lower bound of the probability of type class and $b_4^n=(n+1)^{-|\calX|}b_2^n$, and Eq. \eqref{DMSconv:use_1/2} follows for large enough $n$.

Invoking Definition \ref{def:maximal_leakage_orginal}, we have that
\begin{align}
\frac{1}{n}\rmL(X^n\to M_1M_2)&\geq \frac{1}{n}\log\frac{\Pr\left\{U=g(M_1,M_2)\right\}}{\max_{u\in\calU}P_U(u)}\\
&\geq\frac{1}{n}\log\frac{b_4^n}{2}\max_{Q_X\in\calQ_{n}(\alpha,-\tau)}2^{n(R(Q_X,R_1,D_1,D_2)-r_1-r_2)} \label{DMSconv:use_def_p*}\\
&\geq\max_{Q_X:D(Q_X||P)\leq\alpha}R(Q_X,R_1,D_1,D_2)-r_1-r_2,\label{DMSconv:use_continue}
\end{align}
where Eq. \eqref{DMSconv:use_def_p*} follows from Eq. \eqref{DMSconv:result_U_gM1M2} and the fact that $p^*=\max_{u\in\calU}P_U(u)$ and Eq. \eqref{DMSconv:use_continue} follows from the continuity of $R(Q_X,R_1,D_1,D_2)$. Since $\calL(X^n\to M_1M_2)$ is non-negative by definition, it follows that
\begin{align}
\frac{1}{n}\rmL(X^n\to M_1M_2) \geq\max_{Q:D(Q||P)\leq\alpha}\big\{R(Q,R_1,D_1,D_2)-r_1-r_2\big\}^+. \label{DMSconv:L_positive}
\end{align}

\section{Proof of Theorem \ref{theo:DMS_expectation}}\label{sec:Conv_DMS_expectation}
\subsection{Proof of Main Results}
The proof of the achievability part is similar to the case under JEP in Section \ref{sec:Ach_TheoremDMS}, which can be obtained from the achievability analyses from Theorem \ref{theo:DMS}. Specifically,
\begin{align}
\limsup_{n\to\infty}\frac{1}{n}\rmL(X^n\to M_1)&\leq \lim_{\alpha\to0}\Lambda_1(P,\vec{R_1},D_1,\alpha)\\
&=\Omega_1(P,\vec{R_1},D_1),\\
\limsup_{n\to\infty}\frac{1}{n}\rmL(X^n\to M_1M_2) &\leq\lim_{\alpha\to0}\Lambda_2(P,\vec{R_1},\vec{R_2},D_1,D_2,\alpha)\\
&=\Omega_2(P,\vec{R_1},\vec{R_2},D_1,D_2),
\end{align}
where $\alpha$ can be chosen as $\alpha=\frac{\log n}{n}$, such that JEP exponent is upper bounded by $\frac{1}{n}$. Thus, the expected distortion can be ensured~\cite[pp. 190]{zhou2023monograph} by the rate requirements in Eq. \eqref{DMSexpect:req_R1}-\eqref{DMSexpect:req_R1R2}.

To prove the converse part, we need the following lemma that characterizes the reliability performance under the expected distortion constraint and the secrecy performance under the equivocation constraint. Let $(E_1,E_2)\in\bbR_+^2$ be arbitrary.
\begin{lemma}\label{theo:lemma_DMS_expect_corollary5}
Let $\calR$ denote the closure of pairs $(\vec{R_1},\vec{R_2},D_1,D_2,E_1,E_2)$ such that there exists a sequence of $(n,\vec{R_1},\vec{R_2})$-codes satisfying the expected distortion constraints in \eqref{avg:const1}, \eqref{avg:const2} and the following two secrecy constraints
\begin{align}
\liminf_{n\to\infty}\frac{1}{n}H(X^n|M_1)&\geq E_1,\label{equi:1}\\
\liminf_{n\to\infty}\frac{1}{n}H(X^n|M_1,M_2)&\geq E_2\label{equi:2}.
\end{align}
We have the following outer bound for $\calR$:
\begin{align}
\calR\subseteq\calR_{\mathrm{out}}=\bigcup_{P_{\hatX_1\hatX_2|X}}\left\{
\begin{array}{l}
(\vec{R_1},\vec{R_2},D_1,D_2,E_1,E_2):\\
R_1\geq I(X;\hatX_1)\\
R_1+R_2\geq I(X;\hatX_1,\hatX_2)\\
D_1\geq\mathbf{E}\big[d_1(X,\hatX_1)\big]\\
D_2\geq\mathbf{E}\big[d_2(X,\hatX_2)\big]\\
E_1\leq H(X)-\{I(X;\hatX_1)-r_1\}^+\\
E_2\leq H(X)-\{I(X;\hatX_1,\hatX_2)-r_1-r_2\}^+
\end{array}
\right\}. \label{DMStheo:calR}
\end{align}
\end{lemma}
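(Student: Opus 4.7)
The plan is to generalize the converse technique of Schieler and Cuff~\cite{schieler2014}, originally developed for the single-layer rate-distortion-equivocation problem, to the two-layer successive refinement setting by producing two parallel chains of inequalities---one for each decoder. Three ingredients are involved: (i) a standard mutual-information lower bound on the rates using the independence of the keys from the source, (ii) an upper bound on each equivocation obtained by pulling out the corresponding key entropy via the chain rule, and (iii) a single-letterization using a common time-sharing auxiliary so that one joint distribution $P_{\hatX_1\hatX_2|X}$ simultaneously certifies all six constraints defining $\calR_{\mathrm{out}}$.

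For the rate and distortion bounds, I would exploit the independence of $K_i^n$ from $X^n$ together with the fact that $\hatX_1^n,\hatX_2^n$ are deterministic functions of the messages and keys. The chain
\begin{align*}
nR_1 \geq H(M_1) &\geq H(M_1|K_1^n) \geq I(X^n;M_1|K_1^n) \\
&= H(X^n) - H(X^n|M_1,K_1^n) \geq I(X^n;\hatX_1^n)
\end{align*}
handles $R_1$, and the analogous computation applied to $(M_1,M_2)$ and $(K_1^n,K_2^n)$ handles $R_1+R_2$. The expected distortion constraints transfer to the single-letter quantities $\mathbf{E}[d_i(X,\hatX_i)]$ automatically once the time-sharing auxiliary is introduced.

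For the equivocation bounds, I would pull out the key by writing
\begin{align*}
H(X^n|M_1) &\leq H(X^n,K_1^n|M_1) \leq nr_1 + H(X^n|M_1,K_1^n) \\
&\leq nr_1 + nH(X) - I(X^n;\hatX_1^n),
\end{align*}
and then combining this with the trivial bound $E_1 \leq H(X)$ to obtain $E_1 \leq H(X) - \{I(X;\hatX_1) - r_1\}^+$ after the single-letterization $\tfrac{1}{n}I(X^n;\hatX_1^n) \geq I(X;\hatX_1)$. The same template applied to $(M_1,M_2)$, $(K_1^n,K_2^n)$, and $(\hatX_1^n,\hatX_2^n)$ delivers the $E_2$ bound, with $nr_1$ replaced by $n(r_1+r_2)$.

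For the single-letterization I would introduce $T$ uniform on $[1{:}n]$ and set $X := X_T$, $\hatX_i := \hatX_{i,T}$; memorylessness gives $X \sim P$, and the standard chain-rule computation together with $X_T \indep T$ yields $\tfrac{1}{n} I(X^n;\hatX_i^n) \geq I(X;\hatX_i)$ and its joint counterpart. The main obstacle I anticipate is ensuring that the \emph{same} auxiliary $P_{\hatX_1\hatX_2|X}$ validates all six constraints simultaneously; this is precisely why a single time-sharing variable $T$ is shared across both layers, so that the joint distribution of $(X,\hatX_1,\hatX_2)$ is fixed throughout. A minor care is the positive-part truncation in $\{\cdot\}^+$: it reflects that equivocation cannot exceed $H(X)$, so the key-pulling bound is kept only when the mutual information dominates the relevant key rate.
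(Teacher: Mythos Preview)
Your proposal is correct and takes a genuinely different, more elementary route than the paper.  The paper does \emph{not} argue directly as you do; instead it first introduces a full causal-disclosure model for successive refinement (Eve observes noisy past source and reconstruction symbols), proves a general payoff region $\calS_{\mathrm{out}}$ with four auxiliaries $U_1,U_2,V_1,V_2$ (their Lemma~\ref{theo:lemma_cuff_theo1}, extending \cite[Theorem~1]{schieler2014}), then specializes by setting the disclosure to $W_{0,i}=X_i$, $W_{1,i}=W_{2,i}=\mathrm{const}$ and the payoff functions to log-loss so that Eve's optimal payoff collapses to the equivocations $H(X|U_1)$ and $H(X|U_1,U_2)$; finally it shows $\calS_{\mathrm{out}}=\calR_{\mathrm{out}}$ by an algebraic matching of the $(U_i,V_i)$ description to the $(\hatX_1,\hatX_2)$ description.

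Your argument bypasses all of this machinery: you pull the keys out of the conditional entropy via $H(X^n|M_1)\le H(K_1^n)+H(X^n|M_1,K_1^n)\le nr_1+H(X^n|\hatX_1^n)$, combine with the trivial bound $H(X^n|M_1)\le nH(X)$ to obtain the positive part, and single-letterize once with a common time-sharing variable $T$ so that the same $P_{\hatX_1\hatX_2|X}$ certifies all six constraints.  This is exactly the style of the converse for \cite[Corollary~5]{schieler2014}, lifted to two layers, and the paper itself concedes in its conclusion that ``the converse proof \ldots might be simplified without considering causal disclosure.''  What the paper's longer route buys is the intermediate Lemma~\ref{theo:lemma_cuff_theo1}, which characterizes the successive-refinement SCS under general causal disclosure and arbitrary payoff functions---a result the authors argue is of independent interest.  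Your approach buys brevity and transparency but does not produce that byproduct.  One small point you should make explicit: since the equivocation constraints are asymptotic ($\liminf$) while your single-letter distribution depends on $n$, you need to pass to a convergent subsequence of $P^{(n)}_{\hatX_1\hatX_2|X}$ (compactness of the simplex over finite alphabets) to exhibit a limiting $P_{\hatX_1\hatX_2|X}$ satisfying all constraints simultaneously; this is routine but should be stated.
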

The proof of Lemma~\ref{theo:lemma_DMS_expect_corollary5} is provided in Section \ref{sec:Conv_DMS_lemma_expected}.

For any discrete random variable, it follows that
\begin{align}
\rmL(X^n\to M_1)&=I_{\infty}(X^n;M_1) \label{DMSexpectconv:equal_infintyI}\\
&\geq I(X^n;M_1), \label{DMSexpectconv:geq_I}
\end{align}
where Eq. \eqref{DMSexpectconv:equal_infintyI} follows from~\cite[Theorem 1]{issa2020leakage} and Eq. \eqref{DMSexpectconv:geq_I} follows from~\cite[Theorem 2]{verdu2015alpha}.

Invoking Lemma~\ref{theo:lemma_DMS_expect_corollary5}, we lower bound $I(X^n;M_1)$ as follows:
\begin{align}
I(X^n;M_1)&=H(X^n)-H(X^n|M_1)\\
&=nH(X)-H(X^n|M_1)\\
&\geq n\{I(X;\hatX_1)-r_1\}^+, \label{DMSexpectconv:using_lemma4}
\end{align}
where Eq. \eqref{DMSexpectconv:using_lemma4} follows since $E_1\leq H(X)-\{I(X;\hatX_1)-r_1\}^+$ and we take equality for Eq. \eqref{equi:1}. Such a choice is valid since the mutual information leakage is minimized when the equivocation of adversary is maximized.
Similarly, we have
\begin{align}
\rmL(X^n\to M_1M_2)&\geq n\{I(X;\hatX_1,\hatX_2)-r_1-r_2\}^+.
\end{align}
The proof is completed using the definitions of $R(P,D_1)$ and $R(P,R_1,D_1,D_2)$ in Eq. \eqref{def:R(QD1)} and \eqref{def:R(QR1D1D2)}, respectively, and dividing $n$ at both side.

\subsection{Proof of Lemma~\ref{theo:lemma_DMS_expect_corollary5}}\label{sec:Conv_DMS_lemma_expected}
The proof of Lemma~\ref{theo:lemma_DMS_expect_corollary5} is decomposed into three steps. Firstly, inspired by~\cite[Theorem 1]{schieler2014}, in Section~\ref{sec:Conv_DMS_expectation_causaldef}, we generalize the causal disclosure problem from the point-to-point setting~\cite[Section II]{schieler2014} to the successive refinement setting. Here causal disclosure means that Eve has additional access to the past source and reconstruction symbols beyond the encoded messages output by both encoders. Subsequently, in Section~\ref{sec:Conv_DMS_expectation_generalcausal}, we derive a converse bound for the successive refinement problem with causal disclosure. Finally, via proper specialization, in Section~\ref{sec:Conv_DMS_expectation_lemma_3}, we derive a converse bound for the successive refinement setting of SCS under the equivocation secrecy constraint as in Lemma \ref{theo:lemma_DMS_expect_corollary5}. The reason why the causal disclosure setting is valid to establish the results in Lemma \ref{theo:lemma_DMS_expect_corollary5} and Theorem~\ref{theo:DMS_expectation} is explained in Section~\ref{sec:Conv_DMS_expectation_why}.

\subsubsection{Causal Disclosure under Successive Refinement}\label{sec:Conv_DMS_expectation_causaldef}
Recall the successive refinement setting of SCS illustrated in Fig.\ref{fig:systemmodel}. In the causal disclosure setting, as illustrated in Fig. \ref{fig:causal_disclosure}, at each time $i\in[n]$, the eavesdropper has additional access to potentially noisy observations of the past source sequence and reconstruction symbols $(W_0^{i-1},W_1^{i-1},W_2^{i-1})$. In particular, $W_{0,i}$ is the output of passing $X_i$ through a noisy channel $P_{W_0|X}$, and $W_{k,i}$ is the output of passing $\hatX_{k,i}$ through a noisy channel $P_{W_i|\hatX_{k,i}}$ for each $k\in[2]$. Using the message $M_1$ and causal observations $(W_0^{i-1},W_1^{i-1})$, Eve aims to estimate a function of the $i$-th source symbol $X_i$ as $C_{1,i}$ using decoder $\psi_{1,i}$. Using the messages $(M_1,M_2)$ and causal observations $(W_0^{i-1},W_1^{i-1},W_2^{i-1})$, Eve aims to estimate another function of the $i$-th source symbol $X_i$ as $C_{2,i}$ using decoder $\psi_{2,i}$. For simplicity, we denote the pair $(W_0^i,W_1^i)$ and $(W_0^i,W_2^i)$ by $W_\alpha^i$ and $W_\beta^i$, respectively.

Let $(\calW_0,\calW_1,\calW_2,\calC_1,\calC_2)$ be finite alphabets. Recall the definition of an $(n,\vec{R_1},\vec{R_2})$-code in Definition~\ref{def:nR1R2_code}, an $(n,\vec{R_1},\vec{R_2})$-causal disclosure code has two more decoders for Eve at each step $i\in[n]$:
\begin{align}
\psi_{1,i}&:\calM_1\times\calW_0^{i-1}\times\calW_1^{i-1}\to\calC_1\\
\psi_{2,i}&:\calM_1\times\calM_2\times\calW_0^{i-1}\times\calW_1^{i-1}\times\calW_2^{i-1}\to\calC_2.
\end{align}

Define the following two symbol-wise pay-off functions for Eve: $\pi_1:\calX\times\hat{\calX}_1\times\calC_1\to(0,\infty)$ and $\pi_2:\calX\times\hat{\calX}_2\times\calC_2\to(0,\infty)$.
\begin{definition}
Fix a source distribution $P$. The quadruple $(\vec{R_1},\vec{R_2},\Pi_1,\Pi_2)$ is achievable if there exists a sequence of $(n,\vec{R_1},\vec{R_2})$-causal disclosure codes such that
\begin{align}
\liminf_{n\to\infty}\min_{\{P_{C_{1,i}|M_1,W_\alpha^{i-1}}\}_{i=1}^n}\mathbf{E}\bigg[\frac{1}{n} \sum_{i=1}^n\pi_1(X_i,\hatX_{1,i},C_{1,i})\bigg] \geq &\Pi_1, \label{def:Pi_1}\\
\liminf_{n\to\infty}\min_{\{P_{C_{2,i}|M_1,M_2,W_\alpha^{i-1},W_\beta^{i-1}}\}_{i=1}^n}\mathbf{E}\bigg[\frac{1}{n} \sum_{i=1}^n\pi_2(X_i,\hatX_{1,i}, \hatX_{2,i},C_{2,i})\bigg] \geq &\Pi_2. \label{def:Pi_2}
\end{align}
The closure of the set of all achievable quadruple $(\vec{R_1},\vec{R_2},\Pi_1,\Pi_2)$ is called optimal achievable $(\vec{R_1},\vec{R_2},\Pi_1,\Pi_2)$ region and denoted as $\calS$.
\end{definition}

{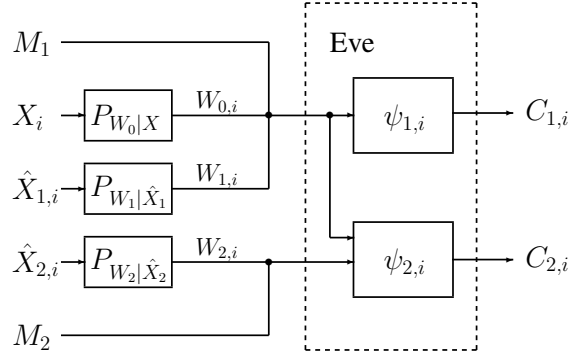
\begin{figure}[t]
\centering
\setlength{\unitlength}{0.5cm}
\scalebox{0.65}{
\begin{picture}(22,18)
\linethickness{1pt}
\put(0,12){\makebox{\LARGE$X_i$}}
\put(0,9){\makebox{\LARGE$\hatX_{1,i}$}}
\put(0,6){\makebox{\LARGE$\hatX_{2,i}$}}
\put(2,12.4){\vector(1,0){1}}
\put(2,9.4){\vector(1,0){1}}
\put(2,6.4){\vector(1,0){1}}
\put(3,11.4){\framebox(3.5,2)}
\put(3.2,12){\makebox{\LARGE$P_{W_0|X}$}}
\put(3,8.4){\framebox(3.5,2)}
\put(3.2,9.1){\makebox{\LARGE$P_{W_1|\hatX_1}$}}
\put(3,5.4){\framebox(3.5,2)}
\put(3.2,6.1){\makebox{\LARGE$P_{W_2|\hatX_2}$}}
\put(6.5,12.4){\line(1,0){4}}
\put(6.5,9.4){\line(1,0){4}}
\put(6.5,6.4){\line(1,0){4}}
\put(7.5,12.7){\makebox{\Large$W_{0,i}$}}
\put(7.5,9.7){\makebox{\Large$W_{1,i}$}}
\put(7.5,6.7){\makebox{\Large$W_{2,i}$}}
\put(14,5){\framebox(4,3)}
\put(15.2,6.1){\makebox{\LARGE $\psi_{2,i}$}}
\put(14,10.9){\framebox(4,3)}
\put(15.2,12){\makebox{\LARGE $\psi_{1,i}$}}
\multiput(12,17)(0,-0.4){36}{\line(0,-1){0.2}}
\multiput(12,17)(0.4,0){18}{\line(1,0){0.2}}
\multiput(19,17)(0,-0.4){36}{\line(0,-1){0.2}}
\multiput(12,2.8)(0.4,0){18}{\line(1,0){0.2}}
\put(13,15){\makebox{\LARGE Eve}}
\put(0,3){\makebox{\LARGE$M_2$}}
\put(2,3.4){\line(1,0){8.5}}
\put(0,15){\makebox{\LARGE$M_1$}}
\put(2,15.4){\line(1,0){8.5}}
\put(10.5,15.4){\line(0,-1){6}}
\put(10.5,12.4){\circle*{0.3})}
\put(10.5,6.4){\line(0,-1){3}}
\put(10.5,6.4){\circle*{0.3})}
\put(10.5,12.4){\vector(1,0){3.5}}
\put(10.5,6.4){\vector(1,0){3.5}}
\put(13,12.4){\line(0,-1){5}}
\put(13,12.4){\circle*{0.3})}
\put(13,7.4){\vector(1,0){1}}
\put(18,6.5){\vector(1,0){2.5}}
\put(21,6.2){\makebox{\LARGE $C_{2,i}$}}
\put(18,12.5){\vector(1,0){2.5}}
\put(21,12.2){\makebox{\LARGE $C_{1,i}$}}
\end{picture}}
\caption{Illustration of the causal disclosure setting for Eve.}
\label{fig:causal_disclosure}
\end{figure}

By the second remark after~\cite[Definition 2]{schieler2014}, we can also assume that Eve deploys a set of deterministic decoding functions $\{\psi_{1,i}(m_1,w_\alpha^{i-1})\}_{i=1}^n$ and $\{\psi_{2,i}(m_1,m_2,w_\alpha^{i-1},w_\beta^{i-1})\}_{i=1}^n$.

\subsubsection{General Results under Causal Disclosure Setting}\label{sec:Conv_DMS_expectation_generalcausal}
For $i\in[2]$, let $U_i$ and $V_i$ be random variables taking values in the alphabets $\calU_i$ and $\calV_i$ respectively. Define a set of joint distributions on $\calW_0\times\calW_1\times\calW_2\times\calX\times\hat{\calX_1}\times\hat{\calX_2} \times\calU_1\times\calU_2\times\calV_1\times\calV_2$ as 
\begin{align}
\nn\calB:=\Big\{&Q_{W_0W_1W_2X\hatX_1\hatX_2U_1U_2V_1V_2}: W_0-X-(U_1,V_1)-\hatX_1-W_1,\; W_0-X-(U_2,V_2)-\hatX_2-W_2,\\
\nn&|\calU_1|\leq |\calX|+5,\;|\calV_1|\leq|\calX||\hat\calX_1|(|\calX|+5)+3, \;|\calU_2|\leq|\calX|(|\calX|+5)(|\calX||\hat\calX_1|(|\calX|+5)+3)+2,\\
\nn&|\calV_2|\leq|\calX||\hat\calX_1||\hat\calX_2|(|\calX|+5) (|\calX||\hat\calX_1|(|\calX|+5)+3) (|\calX|(|\calX|+5)(|\calX||\hat\calX_1|(|\calX|+5)+3)+2)+1\Big\}.
\end{align}
For simplicity, we use $\tilde{Q}$ to denote $Q_{W_0W_1W_2X\hatX_1\hatX_2U_1U_2V_1V_2}$.

For the successive refinement of SCS with causal disclosure, we derive a converse bound on the message rate, key rate of the legitimate users and the payoff of Eve in the following lemma, which extends the converse part of~\cite[Theorem 1]{schieler2014}.
\begin{lemma}\label{theo:lemma_cuff_theo1}
Fix $P_X$ and the causal disclosure channels $P_{W_0|X}$ and $P_{W_i|\hatX_i}$ for $i\in[2]$. The optimal region of  $(\vec{R_1},\vec{R_2},\Pi_1,\Pi_2)$ satisfies that
{\begin{align}
\calS\subseteq\calS_{\mathrm{out}}=\bigcup_{\tilde{Q}\in\calB}\left\{
\begin{array}{l}
(\vec{R_1},\vec{R_2},\Pi_1,\Pi_2):\\
R_1\geq I(X;U_1,V_1)\\
R_1+R_2\geq I(X;U_1,U_2,V_1,V_2)\\
r_1\geq I(W_0,W_1;V_1|U_1)\\
r_1+r_2\geq I(W_0,W_1,W_2;V_1,V_2|U_1,U_2)\\
\Pi_1\leq\min\limits_{h_1\in\calH_1}\mathbf{E}[\pi_1(X,\hatX_1,h_1(U_1))]\\
\Pi_2\leq\min\limits_{h_2\in\calH_2}\mathbf{E}[\pi_2(X,\hatX_1,\hatX_2,h_2(U_1,U_2))]
\end{array}
\right\},\label{DMSexpectconv:region_cuff_theo1}
\end{align}}
where $\calH_1:=\{h_1:\calU_1\to\calC_1\}$ and $\calH_2:=\{h_2:\calU_1\times\calU_2\to\calC_2\}$.
\end{lemma}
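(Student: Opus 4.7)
The plan is to adapt the single-letter converse of Schieler and Cuff~\cite[Theorem 1]{schieler2014} from the point-to-point setting to the two-layer successive refinement setting. Given any $(n,\vec{R_1},\vec{R_2})$-causal-disclosure code achieving payoffs $(\Pi_1,\Pi_2)$, I would identify time-indexed auxiliary random variables that capture, respectively, the side information available to Eve at each layer and the additional key-derived randomness available to the legitimate decoders, then single-letterize via a time-sharing index $J$ uniform on $[n]$ and independent of everything else.

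Concretely, setting $X:=X_J$, $\hatX_k:=\hatX_{k,J}$ and $W_k:=W_{k,J}$ for $k\in\{0,1,2\}$, I would define
\begin{align*}
U_1&:=(M_1,W_0^{J-1},W_1^{J-1},J), & V_1&:=K_1^n,\\
U_2&:=(M_1,M_2,W_0^{J-1},W_1^{J-1},W_2^{J-1},J), & V_2&:=(K_1^n,K_2^n).
\end{align*}
Since $\hatX_1^n=\phi_1(M_1,K_1^n)$ and $\hatX_2^n=\phi_2(M_1,M_2,K_1^n,K_2^n)$, the reconstruction $\hatX_k$ is a deterministic function of $(U_k,V_k)$ for each $k\in[2]$. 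The Markov chain $W_0-X-(U_k,V_k)-\hatX_k-W_k$ then follows from the memorylessness of $P_X$ and of the causal-disclosure channels $P_{W_0|X}$ and $P_{W_k|\hatX_k}$, which ensure that $W_{0,J}$ and $W_{k,J}$ depend only on $X_J$ and $\hatX_{k,J}$, together with the fact that $\hatX_{k,J}$ is a function of $(U_k,V_k)$.

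For the message-rate bound I would start from $nR_1\geq H(M_1)\geq I(M_1;X^n\mid K_1^n)=\sum_{i}I(X_i;M_1,X^{i-1},K_1^n)$, using the chain rule together with the independence of $K_1^n$ from the memoryless $X^n$; data processing through $X^{i-1}\mapsto(W_0^{i-1},W_1^{i-1})$ (both being functions of $(X^{i-1},M_1,K_1^n)$ and exogenous noise) and averaging over $J$ then yield $R_1\geq I(X;U_1,V_1)$. The identical manipulation conditioned jointly on $(K_1^n,K_2^n)$ gives $R_1+R_2\geq I(X;U_1,U_2,V_1,V_2)$. For the key rate, $nr_1=H(K_1^n)\geq I(K_1^n;W_0^n,W_1^n\mid M_1)=\sum_{i}I(K_1^n;W_{0,i},W_{1,i}\mid M_1,W_0^{i-1},W_1^{i-1})$ single-letterizes to $r_1\geq I(V_1;W_0,W_1\mid U_1)$, and the joint bound $r_1+r_2\geq I(V_1,V_2;W_0,W_1,W_2\mid U_1,U_2)$ follows analogously from $H(K_1^n,K_2^n)$. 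For the payoff bounds, Eve's causal decoder $\psi_{1,i}(M_1,W_0^{i-1},W_1^{i-1})$ is a function of $U_1$ (which includes $J$), so after time-averaging the induced estimate is of the form $h_1(U_1)$; minimizing over $h_1\in\calH_1$ yields $\Pi_1\leq\min_{h_1}\mathbf{E}[\pi_1(X,\hatX_1,h_1(U_1))]$, with an analogous argument for $\Pi_2$ using $U_2$ (noting that $U_1$ is a function of $U_2$ under this identification, so minimizing over $\calH_2=\{h_2:\calU_1\times\calU_2\to\calC_2\}$ is consistent with Eve's strategies).

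I expect the main obstacle to be the cardinality reduction producing the stated nested alphabet bounds. Since $(U_1,V_1,U_2,V_2)$ must simultaneously preserve two Markov chains, four rate/key-rate inequalities, and two payoff expressions depending on $U_1$ and $(U_1,U_2)$, I would apply the Fenchel--Eggleston--Carath\'eodory theorem in a carefully nested order: first reduce $|\calU_1|$ while preserving the marginal of $X$ and $\mathbf{E}[\pi_1]$; then reduce $|\calV_1|$ given $U_1$ while preserving $I(X;U_1,V_1)$, $I(V_1;W_0,W_1\mid U_1)$, and the joint law of $(X,\hatX_1,W_0,W_1)$; then reduce $|\calU_2|$ conditional on $(U_1,V_1)$ while preserving everything so far together with $\mathbf{E}[\pi_2]$; and finally $|\calV_2|$ conditional on $(U_1,V_1,U_2)$ while preserving the remaining mutual-information quantities and the joint law of $(X,\hatX_1,\hatX_2,W_0,W_1,W_2)$. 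The multiplicative structure of the cardinality bounds stated in the lemma mirrors this recursive reduction, with each factor accounting for the dimension added to preserve the next layer of constraints.
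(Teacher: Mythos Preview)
Your proposal is correct and follows essentially the same route as the paper: identical rate, key-rate, and payoff single-letterizations via a uniform time index $J$, with the paper taking $U_{1,j}=(M_1,W_\alpha^{j-1})$, $U_{2,j}=(M_2,W_\beta^{j-1})$, $V_1=K_1^n$, $V_2=K_2^n$ and then invoking the support lemma for the cardinality bounds. The only difference is that you fold $(M_1,W_1^{J-1},K_1^n)$ into your $(U_2,V_2)$; this is harmless since every second-layer constraint involves $(U_1,U_2,V_1,V_2)$ jointly, and it actually makes the Markov chain $W_0-X-(U_2,V_2)-\hatX_2-W_2$ more transparent because $\hatX_2$ becomes a deterministic function of $(U_2,V_2)$.
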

\begin{proof}
Fix a source distribution $P_X$, payoff functions $\pi_1(x,\hatx_1,c_1)$ and $\pi_2(x,\hatx_1,\hatx_2,c_2)$ and causal disclosure channels $P_{W_0|X}$ and $P_{W_i|\hatX_i}$ for $i\in[2]$. For each $j\in[n]$, we define the following auxiliary variables $U_{1,j}:=(M_1,W_\alpha^{j-1})$ and $U_{2,j}:=(M_2,W_\beta^{j-1})$. Let $J$ be an auxiliary variable distributed uniformly from $[n]$, independently from\\ $(X^n,\hatX_1^n,\hatX_2^n,W_\alpha^n,W_\beta^n,M_1,M_2,K_1,K_2)$.

It follows from \cite[Section VII]{schieler2014} that the following bounds on $R_1$, $r_1$ and $\Pi_1$ hold:
\begin{align}
R_1&\geq nI(X;U_1,V_1), \label{DMSconv:cuff_R1}\\
r_1&\geq I(W_\alpha,W_\beta;V_1|U_1),\label{DMSconv:cuff_r1}\\
\Pi_1&\leq\min_{h_1\in\calH_1}\mathop\mathbf{E} \Big[\pi_2(X,\hatX_1,h_1(U_1))\Big],\label{DMSconv:cuff_Pi1}
\end{align}
where Eq. \eqref{DMSconv:cuff_R1}, Eq. \eqref{DMSconv:cuff_r1} and  Eq. \eqref{DMSconv:cuff_Pi1} follows from Eq. (189)-(198), Eq. (199-204) and Eq. (205)-(208) in~\cite{schieler2014}, respectively.

We now bound the additional terms in the successive refinement setting. The sum rate is lower bounded as follows:
\begin{align}
n(R_1+R_2)&\geq H(M_1,M_2)\\
&\geq H(M_1,M_2|K_1,K_2)\\
&\geq I(X^n;M_1,M_2|K_1,K_2)\\
&=I(X^n;M_1,M_2,K_1,K_2)\label{DMSexpectconv:I(XMK)}\\
&=\sum_{j=1}^nI(X_j;M_1,M_2,K_1,K_2|X^{j-1})\\
&=\sum_{j=1}^nI(X_j;M_1,M_2,K_1,K_2,X^{j-1})\\
&=\sum_{j=1}^nI(X_j;M_1,M_2,K_1,K_2, X^{j-1},W_\alpha^{j-1},W_\beta^{j-1}) \label{DMSexpectconv:I(XMKW)}\\
&=\sum_{j=1}^nI(X_j;U_{1,j},U_{2,j},K_1,K_2,X^{j-1}) \label{DMSexpectconv:I(XU1U2V1V2X^j-1)}\\
&\geq\sum_{j=1}^nI(X_j;U_{1,j},U_{2,j},K_1,K_2) \\
&=nI(X_J;U_{1,J},U_{2,J},K_1,K_2,J),
\end{align}
where Eq. \eqref{DMSexpectconv:I(XMK)} follows since $X^n$ is independent from $K_1$ and $K_2$, Eq. \eqref{DMSexpectconv:I(XMKW)} follows from the Markov chain $X_j-(M_1,M_2,K_1,K_2,X^{j-1})-(W_\alpha^{j-1},W_\beta^{j-1})$, Eq. \eqref{DMSexpectconv:I(XU1U2V1V2X^j-1)} follows from the definitions of $U_{1,j}$ and $U_{2,j}$.

Similarly, we can lower bound the sum key rate as follows:
\begin{align}
n(r_1+r_2)&\geq H(K_1,K_2)\\
&\geq H(K_1,K_2|M_1,M_2)\\
&\geq I(W_\alpha^n,W_\beta^n;K_1,K_2|M_1,M_2)\\
&\geq \sum_{j=1}^nI(W_{\alpha,j},W_{\beta,j};K_1,K_2|M_1,M_2, W_\alpha^{j-1},W_\beta^{j-1})\\
&=\sum_{j=1}^nI(W_{\alpha,j},W_{\beta,j};K_1,K_2|U_{1,j},U_{2,j})\\
&=nI(W_{\alpha,J},W_{\beta,J};K_1,K_2|U_{1,J},U_{2,J},J).
\end{align}

Furthermore, we can upper bound the payoff of Eve as follows:
\begin{align}
\Pi_2&\leq\min_{\psi_2\in\Psi_2} \mathop{\mathbf{E}}\limits_{P_{X\hatX_1\hatX_2M_1M_2W_\alpha W_\beta J}} \bigg[\frac{1}{n}\sum_{j=1}^n\pi_2\big(X_j,\hatX_{1,j}\hatX_{2,j}, \psi_2(M_1,M_2,W_\alpha^{j-1},W_\beta^{j-1},j)\big)\bigg]\\
&\leq\min_{\psi_2\in\Psi_2} \mathop{\mathbf{E}}\limits_{P_{X\hatX_1\hatX_2U_1U_2J}} \bigg[\frac{1}{n}\sum_{j=1}^n\pi_2\big(X_j,\hatX_{1,j}\hatX_{2,j},
\psi_2(U_{1,j},U_{2,j},j)\big)\bigg]\\
&=\min_{\psi_2\in\Psi_2}\mathop\mathbf{E}\limits_{P_J} \bigg[\mathop\mathbf{E}\limits_{P_{X\hatX_1\hatX_2U_{1,J}U_{2,J}|J}} \Big[\pi_2\big(X_J,\hatX_{1,J}\hatX_{2,J},\psi_2(U_{1,J},U_{2,J},J)\big)|J\Big]\bigg] \label{DMSexpectconv:EEpiXpsiJ}\\
&=\min_{\psi_2\in\Psi_2}\mathop\mathbf{E}\limits_{P_{X\hatX_1\hatX_2U_{1,J}U_{2,J}J}} \Big[\pi_2\big(X_J,\hatX_{1,J}\hatX_{2,J},\psi_2(U_{1,J},U_{2,J},J)\big)\Big].
\end{align}

Let $U_1:=(U_{1,J},J)$, $U_2:=(U_{2,J},J)$, $X=X_J$, $\hatX_1=\hatX_{1,J}$, $\hatX_2=\hatX_{2,J}$, $W_\alpha=W_{\alpha,J}$, $W_\beta=W_{\beta,J}$, $V_1=K_1$ and $V_2=K_2$. It follows that
\begin{align}
R_1+R_2&\geq I(X;U_1,U_2,V_1,V_2), \label{DMSexpectconv:I(XU1U2V1V2)}\\
r_1+r_2&\geq I(W_\alpha,W_\beta;V_1,V_2|U_1,U_2), \label{DMSexpectconv:I(WVU)}\\
\Pi_2&\leq \min_{h_2\in\calH_2}\mathop\mathbf{E}\Big[\pi_2(X,\hatX_1,\hatX_2,h_2(U_1,U_2))\Big].
\end{align}
Furthermore, the following Markov chain holds:
\begin{align}
W_0-X-&(U_1,V_1)-\hatX_1-W_1,\\
W_0-X-&(U_2,V_2)-\hatX_2-W_2.
\end{align}
To prove the cardinality bounds for $\calU_1$, $\calU_2$, $\calV_1$ and $\calV_2$, we use the support lemma~\cite[Appendix C]{el2011network}.
\end{proof}

\subsubsection{Final Steps}\label{sec:Conv_DMS_expectation_lemma_3}
Now we can prove Lemma~\ref{theo:lemma_DMS_expect_corollary5} by showing that normalized equivocation-based metric is a special case of causal disclosure of  Lemma~\ref{theo:lemma_cuff_theo1} if one chooses the payoff functions to be log-loss functions.

Firstly, we replace the payoff functions $\pi_1$ in Eq. \eqref{def:Pi_1} and $\pi_2$ in Eq. \eqref{def:Pi_2} by the following distortion functions:
\begin{align}
\tilde{d_1}(x^n,\hatx_1^n,c_1^n)&:=\frac{1}{n}\sum_{i=1}^n\tilde{d_1}(x^n,\hatx_{1,i}^n,c_{1,i}),\\
\tilde{d_2}(x^n,\hatx_1^n,\hatx_2^n,c_2^n) &:=\frac{1}{n}\sum_{i=1}^n\tilde{d_2}(x^n,\hatx_{1,i}^n,\hatx_{2,i}^n, c_{2,i}),
\end{align}
and we use $E_1$ and $E_2$ to replace $\Pi_1$ and $\Pi_2$, respectively.

{
Now choose the causal disclose such that for each $i\in[n]$, $W_{0,i}=X_i$  and $W_{1,i}=W_{2,i}$ equals to an arbitrary constant. Recall that $\calP_{\calX|\calY}$ denotes the set of all conditional probability distributions on an alphabet $\calX$ given another alphabet $\calY$, and $C_i$ is the estimation generated by Eve. We set the distortion functions $\tilde{d_1}$ and $\tilde{d_2}$ to be the following log-loss functions:
\begin{align}
\tilde{d_1}(x_i,\hatx_{1,i},c_{1,i})&=-\log c_{1,i}(x_i|m_1,x^{i-1}),\\
\tilde{d_2}(x,\hatx_1,\hatx_2,c_2)&=-\log c_{2,i}(x_i|m_1,m_2,x^{i-1}),
\end{align}
where $c_{1,i}\in\calP_{\calX|\calM_1,\calX^{i-1}}$ and $c_{2,i}\in\calP_{\calX|\calM_1,\calX^{i-1}}$ denote two conditional distributions that correspond to soft decoding.
Let the joint distribution of $X^n,M_1,M_2,\hatX_1^n,\hatX_2^n$ in the successive refinement setting of SCS be $P(X^n,M_1,M_2,\hatX_1^n,\hatX_2^n)$. In the following analyses, the expectation is calculated with respect to the above joint distribution or its induced distributions. It follows that
\begin{align}
E_1&\leq\min_{c_1^n:\{c_{1,i}\in\calP_{\calX|\calM_1,\calX^{i-1}}\}_{i=1}^n} \mathbf{E}\Big[\frac{1}{n}\sum_{i=1}^{n}\tilde{d_1}(X_i,\hatX_{1,i},c_{1,i})\Big]\\ &=\frac{1}{n}\sum_{i=1}^{n}\min_{c_{1,i}\in\calP_{\calX|\calM_1,\calX^{i-1}}} \mathbf{E}\Big[\tilde{d_1}(X_i,\hatX_{1,i},c_{1,i})\Big]\\
&=\frac{1}{n}\sum_{i=1}^{n}\min_{c_{1,i}\in\calP_{\calX|\calM_1,\calX^{i-1}}} \mathbf{E}\Big[\log\frac{1}{c_{1,i}(X_i|M_1,X^{i-1})}\Big]\\
&=\frac{1}{n}\sum_{i=1}^{n}\min_{c_{1,i}\in\calP_{\calX|\calM_1,\calX^{i-1}}}\Big\{ \mathbf{E}\Big[\log\frac{1}{P(X_i|M_1,X^{i-1})}\Big] +\mathbf{E}\Big[\log\frac{P(X_i|M_1,X^{i-1})}{c_{1,i}(X_i|M_1,X^{i-1})}\Big]\Big\}\\
&=\frac{1}{n}\sum_{i=1}^{n}\min_{c_{1,i}\in\calP_{\calX|\calM_1,\calX^{i-1}}}\Big\{ H(X_i|M_1,X^{i-1})+\sum_{m_1,x^{i-1}}P(M_1,x^{i-1}) D(P(X_i|m_1,x^{i-1})||c_{1,i}(X_i|m,x^{i-1})) \Big\}\\
&=\frac{1}{n}\sum_{i=1}^{n}H(X_i|M_1,X^{i-1}) \label{DMSexpectconv:using_cuff_lemma2}\\
&=\frac{1}{n}H(X^n|M_1),
\end{align}
where Eq. \eqref{DMSexpectconv:using_cuff_lemma2} follows since the minimization is over all $C_1$ and the fact that $X_i-(M_1,X^{i-1})-C_{i}$. Similarly, it follows that
\begin{align}
E_2\leq \frac{1}{n}H(X^n|M_1,M_2).
\end{align}}

Combining the above arguments with the bounds on $\Pi_1$ and $\Pi_2$ in Lemma~\ref{theo:lemma_cuff_theo1}, and recalling that we replace $\Pi_1$ and $\Pi_2$ by $E_1$ and $E_2$, it follows that
\begin{align}
\min_{h_1\in\calH_1}\mathbf{E}[\tilde{d_1}(X,\hatX_1,h_1(U_1))]&=H(X|U_1),\label{DMSexpectconv:logloss_1}\\
\min_{h_2\in\calH_2}\mathbf{E}[\tilde{d_2}(X,\hatX_1,\hatX_2,h_2(U_1,U_2))]&=H(X|U_1,U_2).\label{DMSexpectconv:logloss_2}
\end{align}

Recall that the causal disclose satisfies that each $i\in[n]$, $W_{0,i}=X_i$  and $W_{1,i}=W_{2,i}$ equals to an arbitrary constant. Combining Lemma~\ref{theo:lemma_cuff_theo1}, Eq. \eqref{DMSexpectconv:logloss_1} and Eq. \eqref{DMSexpectconv:logloss_2} leads to
\begin{align}
\calS_{\mathrm{out}}=\bigcup_{\tilde{Q}\in\calB}&\left\{
\begin{array}{l}
(\vec{R_1},\vec{R_2},E_1,E_2):\\
R_1\geq I(X;U_1,V_1)\\
R_1+R_2\geq I(X;U_1,U_2,V_1,V_2)\\
r_1\geq I(X;V_1|U_1)\\
r_1+r_2\geq I(X;V_1,V_2|U_1,U_2)\\
D_1\geq\mathbf{E}\big[d_1(X,\hatX_1)\big]\\
D_2\geq\mathbf{E}\big[d_2(X,\hatX_2)\big]\\
E_1\leq H(X|U_1)\\
E_2\leq H(X|U_1,U_2)
\end{array}
\right\}. \label{DMSexpectconv:region_invloking_logloss}
\end{align}

To complete the proof of Lemma~\ref{theo:lemma_DMS_expect_corollary5}, we need to show that $\calS_{\mathrm{out}}=\calR_{\mathrm{out}}$. To do so, let $(\vec{R_1},\vec{R_2},D_1,D_2,E_1,E_2)\subseteq\calS_{\mathrm{out}}$. Furthermore, let $\hatX_1'\triangleq(U_1,V_1)$. It follows that
\begin{align}
R_1&\geq I(X;U_1,V_1)\\
&=I(X;\hatX_1),\\
E_1&\leq H(X|U_1)\\
&=H(X|U_1,V_1)+I(X;V_1|U_1)\\
&= H(X)-\big(I(X;U_1,V_1)-I(X;V_1|U_1)\big)\\
&= H(X)-\big\{I(X;U_1,V_1)-I(X;V_1|U_1)\big\}^+\label{DMSexpectconv:lemma_calS_adding+}\\
&\leq H(X)-\big\{I(X;\hatX_1')-r_1\big\}^+. \label{DMSexpectconv:lemma_calS_r1}
\end{align}

Further defining $(\hatX_1',\hatX_2')\triangleq(U_1,U_2,V_1,V_2)$, it follows that
\begin{align}
R_1+R_2&\geq I(X;U_1,U_2,V_1,V_2)\\
&=I(X;\hatX_1,\hatX_2),\\
E_2&\leq H(X|U_1,U_2)\\
&=H(X|U_1,U_2,V_1,V_2)+I(X;V_1,V_2|U_1,U_2)\\
&=H(X)-\big(I(X;U_1,U_2,V_1,V_2) -I(X;V_1,V_2|U_1,U_2)\big)\\
&=H(X)-\big\{I(X;U_1,U_2,V_1,V_2) -I(X;V_1,V_2|U_1,U_2)\big\}^+\\
&\leq H(X)-\big\{I(X;\hatX_1',\hatX_2')-r_1-r_2\big\}^+,
\end{align}
which implies $(\vec{R_1},\vec{R_2},D_1,D_2,E_1,E_2)\subseteq\calR$. Thus, $\calS_{\mathrm{out}}\subseteq\calR_{\mathrm{out}}$.

To show $\calR_{\mathrm{out}}\subseteq\calS_{\mathrm{out}}$, consider $(\vec{R_1},\vec{R_2},D_1,D_2,E_1,E_2)\subseteq\calR_{\mathrm{out}}$. Define $(U_1',V_1')\triangleq\hatX_1$. The Markov chain $U_1'-\hatX_1-X$ holds and thus
\begin{align}
H(X|U_1')=H(X)-\big\{I(X;\hatX_1)-r_1\big\}^+\label{DMSexpectconv_lemma_calR_condition1}.
\end{align}
Note that \eqref{DMSexpectconv_lemma_calR_condition1} is possible since $H(X|\hatX_1)\leq H(X|U_1')\leq H(X)$ and the right side of Eq. \eqref{DMSexpectconv_lemma_calR_condition1} lies in the interval $[H(X|\hatX_1),H(X)]$. Using the definition of $\calR_{\mathrm{out}}$, it follows that
\begin{align}
R_1&\geq I(X;\hatX_1)\\
&=I(X;U_1',V_1'),\\
E_1&\leq H(X)-\big\{I(X;\hatX_1)-r_1\big\}^+\\
&=H(X|U_1'),\\
r_1&\geq H(X|U_1')-H(X|\hatX_1)\label{DMSexpectconv:lemma_calR_r1}\\
&=H(X|U_1')-H(X|U_1',V_1')\\
&=I(X;V_1'|U_1'),
\end{align}
where Eq. \eqref{DMSexpectconv:lemma_calR_r1} follows from Eq. \eqref{DMSexpectconv_lemma_calR_condition1}. We further define $(U_1',U_2',V_1',V_2')\triangleq(\hatX_1,\hatX_2)$. It follows that the Markov chain $(U_1',U_2')-(\hatX_1,\hatX_2)-X$ holds and thus
\begin{align}
H(X|U_1',U_2')=H(X)-\big\{I(X;\hatX_1,\hatX_2)-r_1-r_2\big\}^+\label{DMSexpectconv_lemma_calR_condition2}.
\end{align}
Note that \eqref{DMSexpectconv_lemma_calR_condition2} is possible since $H(X|\hatX_1,\hatX_2)\leq H(H(X|U_1',U_2')\leq H(X)$ due to the Markov chain and the right side of Eq. \eqref{DMSexpectconv_lemma_calR_condition2} lies in the interval $[H(X|\hatX_1,\hatX_2),H(X)]$.
Again, using the definition of $\calR_{\mathrm{out}}$, it follows that
\begin{align}
R_1+R_2&\geq I(X;\hatX_1,\hatX_2)\\
&=I(X;U_1',U_2',V_1',V_2'),\\
E_2&\leq H(X)-\big\{I(X;\hatX_1,\hatX_2)-r_1-r_2\big\}^+\\
&=H(X|U_1',U_2'),\\
r_1+r_2&\geq H(X|U_1',U_2')-H(X|\hatX_1,\hatX_2)\label{DMSexpectconv:lemma_calR_r1r2}\\
&=H(X|U_1',U_2')-H(X|U_1',U_2',V_1',V_2')\\
&=I(X;V_1',V_2'|U_1',U_2'),
\end{align}
where Eq. \eqref{DMSexpectconv:lemma_calR_r1r2} follows from Eq. \eqref{DMSexpectconv_lemma_calR_condition2}. The above equations implies that $(\vec{R_1},\vec{R_2},D_1,D_2,E_1,E_2)\subseteq\calS_{\mathrm{out}}$ and thus $\calR_{\mathrm{out}}\subseteq\calS_{\mathrm{out}}$. The proof of Lemma \ref{theo:lemma_DMS_expect_corollary5} is completed by noting that $\calS_{\mathrm{out}}=\calR_{\mathrm{out}}$.

\subsubsection{The Rationality of Causal Disclosure}\label{sec:Conv_DMS_expectation_why}
We next explain why such an assumption is reasonable for the converse of Theorem~\ref{theo:DMS_expectation}.  As shown in Section~\ref{sec:Conv_DMS_expectation_lemma_3}, we bound the distortion of Eve denoted by $E_1$ and $E_2$ by using log-loss distortion as payoff functions and choosing the causal disclosure such that for each $i\in[n]$, $W_{0,i}=X_i$  and $W_{1,i}=W_{2,i}$ equals to an arbitrary constant. Under logloss distortion, Eve's payoff functions equal to equivocation terms $H(X^n|M_1)$ and $H(X^n|M_1,M_2)$. Recall that the causal disclosure is added to the successive refinement problem (cf. Fig. \ref{fig:causal_disclosure}). It follows that the public messages $(M_1,M_2)$ observed by Eve and the source sequence $X^n$ under the causal disclosure setting is exactly the same as the original problem without the causal disclosure. Thus, the normalized equivocation given by Eq. \eqref{equi:1} and \eqref{equi:2} is the exact normalized equivocation for Eve without the causal disclosure. Using the relationship between mutual information and equivocation that $I(X^n;M_1)=H(X^n)-H(X^n|M_1)$ and $I(X^n;M_1,M_2)=H(X^n)-H(X^n|M_1,M_2)$ and noting that maximal leakage is lower bounded by mutual information, the converse of Theorem~\ref{theo:DMS_expectation} is completed.

\section{Conclusion}\label{sec:conclusion}
We studied the successive refinement setting of Shannon cipher system that models multiuser secure communication with secret keys over a noiseless channel. Under both JEP and expected distortion reliability constraints, we derived inner and outer bounds for the asymptotic normalized information leakage region measured via maximal leakage for DMS under bounded distortion measures. Our bounds match under mild conditions on the key rates, which correspond to partial secrecy. Our result revealed the fundamental trade-off between reliability and secrecy. Counter-intuitively, although JEP appears a stronger reliability constraint, the leakage region under JEP is identical to the corresponding region under expected distortion for certain sources. To prove the converse result under expected distortion, we study a causal disclosure setting, where the eavesdropper could additionally acquire the past source and reconstruction symbols. With proper specialization, the converse result for the case with causal disclosure yields the desired converse result in Theorem \ref{theo:DMS_expectation}. It is possible that the converse proof of Theorem~\ref{theo:DMS_expectation} might be simplified without considering causal disclosure. However, given the generality and potential applications of the causal disclosure setting, we believe our converse proof is of independent interest. Specifically, a critical step in the converse proof Theorem~\ref{theo:DMS_expectation} is Lemma~\ref{theo:lemma_DMS_expect_corollary5}, which is itself of interest. This is because Lemma~\ref{theo:lemma_DMS_expect_corollary5} implies that with causal disclosure, the information leakage to Eve is not increased. In other words, having additional access to past source symbols to estimate the current source symbol cannot help Eve to have better performance. Such a result is in stark contrast to lossy source coding with causal side information, where the causally available side information can strictly reduce the compression rate and allow better performance~\cite[Example 11.1]{el2011network}.

There are several avenues for future research. Firstly, it is of interest to generalize our results to other multiterminal lossy source coding problems, e.g., Gray-Wyner~\cite{gray1974source} and multiple descriptions~\cite{ahlswede1986multiple}, and uncover the reliability-leakage tradeoff for more diverse multiuser secure communication settings with secret keys. Secondly, it is worthwhile to generalize our results to the noisy lossy source coding setting~\cite{wolf1970noisy}, where the source sequence to be compressed is available indirectly via a noisy channel. Such a setting is practical in certain applications and could be related to semantic compression~\cite{liu-vincent2021isit}. Thirdly, one can also adopt other secrecy metric beyond maximal leakage, e.g., maximal $\alpha$-leakage~\cite{liao2019alpha} and maximal $(\alpha,\beta)$-leakage~\cite{Gilani2022alphabeta}. Finally, it is worthwhile to consider a perception constraint~\cite{blau2018perception,chen2022perception} and investigate the tradeoff among reliability, perception and secrecy to further understand privacy constrained efficient compression of images and videos.

\section*{ACKNOWLEDGMENT}
The authors would like to thank the Associate Editor and anonymous reviewers for helpful comments and suggestions, which helps improve the quality of the present manuscript.

\bibliographystyle{IEEEtran}
\bibliography{IEEEfull_fei}

\end{document}